\documentclass[10 pt,twocolumn]{IEEEtran}

\usepackage[dvips]{graphicx}
\usepackage{amssymb}
\usepackage{graphicx}
\usepackage[cmex10]{amsmath}
\usepackage{algorithm}
\usepackage{algorithmic}
\usepackage{array}
\usepackage{eqparbox}
\usepackage[tight,footnotesize]{subfigure}
\usepackage{float}
\usepackage{url}
\usepackage{paralist}
\usepackage{multirow}
\usepackage{wrapfig}

\newtheorem{thm}{Theorem}[section]
\newtheorem{lem}{Lemma}


\begin{document}

\title{Target Tracking via Crowdsourcing: A Mechanism Design Approach}


\author{\IEEEauthorblockN{Nianxia Cao, Swastik Brahma, Pramod K. Varshney\\}
\IEEEauthorblockA{Department of Electrical Engineering and Computer Science, Syracuse University, NY, 13244, USA\\
\{ncao, skbrahma, varshney\}@syr.edu}}

\maketitle


\begin{abstract}
\boldmath In this paper, we propose a crowdsourcing based framework for myopic target tracking by designing an incentive-compatible mechanism based optimal auction in a wireless sensor network (WSN) containing sensors that are selfish and profit-motivated. For typical WSNs which have limited bandwidth, the fusion center (FC) has to distribute the total number of bits that can be transmitted from the sensors to the FC among the sensors. To accomplish the task, the FC conducts an auction by soliciting bids from the selfish sensors, which reflect how much they value their energy cost. Furthermore, the rationality and truthfulness of the sensors are guaranteed in our model. The final problem is formulated as a multiple-choice knapsack problem (MCKP), which is solved by the dynamic programming method in pseudo-polynomial time. Simulation results show the effectiveness of our proposed approach in terms of both the tracking performance and lifetime of the sensor network.
\end{abstract}

\begin{IEEEkeywords}
Crowdsourcing, target tracking, incentive-compatible mechanism design, auctions, bandwidth allocation, multiple-choice knapsack problem, dynamic programming.
\end{IEEEkeywords}

%
\IEEEpeerreviewmaketitle

\section{Introduction}
Crowdsourcing is the practice of obtaining needed services, ideas, or content by soliciting contributions from a large group of people, rather than from traditional employees or suppliers \cite{wiki:crowdsourcing}. Many of today's sensing applications allow users carrying devices with built-in sensors,
such as sensors built in smart phones, and automobiles, to contribute towards
an inference task with their sensing measurements, which is exactly an application of crowdsourcing. 
For instance, today's smart phones are embedded with various sensors, such as camera, microphone,
accelerometer, and GPS, which can be used to acquire information regarding a phenomenon of interest. 
An advantage of such architectures is that they do not need a dedicated sensing infrastructure for different inference tasks,
thereby providing cost effectiveness.
Another advantage of such architectures is that they allow ubiquitous coverage.

Systems and applications that rely on utilizing an infrastructure where crowdsourced sensing measurements of
participating users are used
are poised to revolutionize many sectors of our life.
Some example application domains include 
social networks, environmental monitoring~\cite{Envir_moni,noisePol},
green computing~\cite{ganti2010greengps},
target localization and tracking~\cite{bikeTrack,PositionIt,mobileObjectAlgo,Nest,ObjectsCallingHome},
healthcare~\cite{reddy2007image} (such as predicting and tracking disease patterns/outbreaks),
and tracking traffic patterns \cite{yoon2007surface,kerner2005traffic}.
For instance, the OpenSense project~\cite{Envir_moni} involves the design of a 
sensing infrastructure for real-time air quality monitoring using heterogeneous
sensors owned by the general public, while~\cite{noisePol} involves the
design of a similar system to monitor noise levels.
GreenGPS~\cite{ganti2010greengps} uses data from sensors installed in automobiles to map fuel
consumption on city streets and construct fuel efficient routes between arbitrary end-points.
Various systems to estimate object locations and to track them using smartphone sensors
have also been proposed. For instance, work reported in \cite{PositionIt,Nest} utilizes
built-in sensors in smartphones such as camera, digital compass and GPS,
to estimate a target location as well as monitor the velocity of moving objects. In \cite{bikeTrack,mobileObjectAlgo,ObjectsCallingHome},
proximity sensors in built-in smartphones are used 
to track objects (such as lost/stolen devices) installed with electronic tags
(such as Bluetooth or RFID tags).
Such systems have important commercial applications (such as tracking lost/stolen objects or accurately estimating arrival time of buses)
as well as defense related applications (estimating the enemy's vehicle position prior to an attack).


Existing sensing applications and systems assume voluntary participation of users,
for example,~\cite{paticipatory_sensing_Estrin,yoon2007surface,bikeTrack,
PositionIt,mobileObjectAlgo,Nest,ObjectsCallingHome}.
While participating in a sensing task, users consume their own resources such as energy and
processing power.
Moreover, users may also have concerns regarding their privacy.
As a result, existing applications and systems may suffer from
insufficient number of participants because it may not be rational for the users to participate.
Thus, 
there is a need to design sensing architectures that can provide appropriate
incentives to the users to motivate their participation. 
Furthermore, users, being selfish in nature, may manipulate 
protocols of the sensing architectures for their own benefits.
Thus, a critical property that any mechanism involving selfish entities
should exhibit is strategy-proofness or truthfulness. 
As has been shown in~\cite{Klemperer00whatreally}, mechanisms 
that are not truthful are prone
to market manipulations and can have inefficient outcomes.

Most past work focusses on sensor management problems without addressing 
selfish concerns of
participants~\cite{Zhao:tsp02,ac10:hoffman,long_analog,engin_localization,ncao_Fusion13}.
Information based measures have been used for sensor management in \cite {Zhao:tsp02,ac10:hoffman}
which maximize the mutual information between the sensor measurements and target state.
Sensor selection strategies that minimize the bound on the estimation error, which is the inverse of Fisher Information \cite{long_analog,engin_localization}
are computationally more efficient than the information based sensor selection
methods \cite{engin_localization}. In \cite{ncao_Fusion13}, the Nondominating
Sorting Genetic Algorithm-II method is employed for the multi-objective optimization based sensor selection problem.
Since selecting a subset of 
informative sensors out of $N$ sensors in the network is an NP-hard combinatorial problem,
in \cite{Joshi:tsp09}, the binary variable sensor selection problem is relaxed and solved using convex optimization.
Transmission of quantized measurements is required in typical WSNs that have limited resources (energy and bandwidth).
This gives rise to the more general problem of bit allocation. Given the total bandwidth constraint, the Fusion Center (FC) determines the optimal bandwidth distribution for the
channels between the sensors and the FC.
In \cite{engin_bit_allocation}, the myopic bandwidth allocation problem is considered and the algorithms to solve the
problem, namely, convex relaxation, A-DP, GBFOS and greedy search, are compared.

Market based 
mechanisms for 
sensor management have started to gain attention only recently \cite{customerSensor,Pricetheorey_Chavali,masazade_icassp13}.
In~\cite{customerSensor}, the authors explored the possibility of using 
economic concepts for sensor management without explicitly formulating a specific problem.
The authors in~\cite{Pricetheorey_Chavali} used the concept of Walrasian equilibrium~\cite{WalrasianEq}
to model market based sensor management.
In~\cite{masazade_icassp13}, the authors also proposed a Walrasian equilibrium based dynamic
bit allocation scheme for target tracking in energy constrained wireless sensor networks (WSNs)
using quantized data. 
However, as shown in~\cite{WalrasianUnstable}, Walrasian markets can be unstable and can fail to converge
to the equilibrium.
Furthermore,
computing the equilibrium prices and allocations can be computationally prohibitive.
Accordingly, the authors (\cite{Pricetheorey_Chavali} and references therein)
propose algorithms to compute an approximate equilibrium.
Moreover, the mechanisms proposed in~\cite{Pricetheorey_Chavali,masazade_icassp13}
are not truthful and are, therefore,
prone to market manipulations.

The main objective of this paper is to design a market-based mechanism \cite{algoMechDesign} to trade information
for tracking a target, with the mechanism being  
computationally efficient, individually-rational (to rationalize user participation),
incentive-compatible (to ensure strategy-proofness), and profitable (to ensure feasibility).
However, as opposed to conventional market scenarios, the problem at hand portrays two unique characteristics--
\begin{inparaenum}[\itshape a\upshape)]
\item Here, the traded commodity in the market is \textit{information}. At what prices would information trade,
    given that the prices users would want to sell their information is dependent on their participatory costs?, and,
\item The information acquisition process is in a resource constrained environment with participants having limited energy,
and bandwidth availability for communication.
How do we allocate resources efficiently in such a resource constrained environment?
\end{inparaenum}
To answer both questions, we propose to use \textit{auctions}~\cite{vijayKrishna,algoMechDesign,optimal_auction_myerson}.
One of the chief virtues of auctions is their ability to determine appropriate prices of
traded commodities~\cite{ascendAuction_peterCrampton}.
Further, there is also substantial agreement among economists that auctions are the best way to allocate resources
in a resource constrained environment~\cite{McMillan1995}.
Essentially,
auctions seek an answer to the basic question `Who should get the resources and at what prices?'


In our prior work \cite{ncao_Globalsip13}, we limited our focus on the design of an incentive-based mechanism for target localization via sensor selection (which is a special case of the bit allocation problem\footnote{It should be noted that, for a given total number of bits per time step that can be transmitted from sensors to the fusion center (FC), dynamic bit allocation
distributes the resources more efficiently, and thus provides better estimation performance as compared to the sensor
selection problem [20].}). In this paper, we focus on the more general problem of designing an incentive-based mechanism for target tracking while considering dynamic bit allocation. 
Specifically, in this paper, we propose a reverse
auction\footnote{A reverse auction is one in which the roles of buyers and seller are reversed.} based mechanism
in which an auctioneer (FC)
conducts an auction
to estimate the target location at each tracking
step by soliciting bids from the selfish users (sensors\footnote{In the rest of the paper,
we refer to users as sensors, unless mentioned explicitly.}). The bids of the sensors reflect how much they value their energy costs. Moreover, the sensors' valuations of their energy costs may also increase as the residual energy depletes, which we also consider in our model. 
Our auction mechanism is comprised of two components-
\begin{inparaenum}[\itshape a\upshape)]
\item \textit{bandwidth allocation function}, which determines how to distribute the limited bandwidth (bits) between the sensors and the FC,
and,
\item \textit{pricing function}, which determines the payment to be made to each user.
\end{inparaenum}
The focus of this paper is to design these two functions.

To address the participatory concerns of the selfish sensors,
we 
design the mechanism so that it is always in the best interest of the selfish sensors to participate in the auction.
Further, our proposed auction mechanism is truthful so that it is not prone to market manipulations.
To implement the proposed auction model in a computationally efficient manner, we use 
dynamic programming by formulating the proposed mechanism as a
multiple-choice knapsack problem (MCKP)~\cite{pisinger2004knapsack,Pisinger1995394}.
As is shown in the paper, the dynamic programming approach finds the exact
equilibrium of our model.
Formally, the key contributions of the paper are as follows.
\begin{itemize}
\item We propose an auction-based market mechanism to trade information
    for tracking a target. The proposed mechanism is 
    computationally efficient, individually-rational, incentive-compatible (truthful),
    and profitable. To the best of our knowledge, we are the first to propose a market mechanism for tracking a target
    using selfish users that exhibits the aforementioned properties.
\item We propose a pseudo-polynomial time procedure to implement the proposed
    auction mechanism using dynamic programming. The dynamic programming approach
    can provably sustain the market at the exact equilibrium. Our solution is thus stable.
\item Via extensive simulations, we show the effectiveness of our proposed mechanism, study its characteristics, and also show the benefits of the ``energy-awareness" of the mechanism when the participatory costs (valuations) of the users are dependent on their residual energy.
\end{itemize}

The rest of the paper is organized as follows. In Section II, we introduce the target tracking background.
In Section III, we introduce the basic assumptions and formulate the problem.
We analyze the incentive-based mechanism in Section IV.
The implementation of our proposed mechanism is discussed in Section V as well as the case where the sensors' valuations are dependent on their residual energy.
Simulation results are presented in Section VI, and we conclude our work in Section VII.

\section{Target Tracking in Wireless Sensor Networks}

\subsection{System Model}
We consider a WSN consisting of $N$ selfish sensors which are uniformly deployed in a square region of interest (ROI) of size $b^2$. Note that, our work can handle any sensor deployment pattern and the uniform sensor deployment is employed here for ease of presentation. We assume that the target and all the sensors are based on flat ground and have the same height, so that we can formulate the problem with a 2-D model. The target is assumed to emit a signal from location $(x_t,y_t)$ at time step $t$, and the FC estimates the position and velocity of the target based on the sensor measurements. The state vector of the target at time $t$ is defined by $\mathbf{x}_t=[x_t\hspace{0.5em} y_t\hspace{0.5em}  \dot{x}_t\hspace{0.5em}  \dot{y}_t]^T$, where ($\dot{x}_t,\dot{y}_t$) are the target velocities. Table \ref{notationTable} gives the notations we use in the paper. The target motion dynamics is defined according to the following linear model,
\begin{equation}
	\label{track_model}
\mathbf{x}_{t+1}=\mathbf{F}\mathbf{x}_t+v_t
\end{equation}where $v_t$ is the zero-mean, Gaussian process noise with covariance matrix $\mathbf{Q}$ where $\mathbf{F}$ and $\mathbf{Q}$ are defined as,
\begin{equation}
	\label{D_Q}
\mathbf{F}= \left[ \begin{array}{cccc}
1 & 0 & \mathcal{D} & 0 \\
0 & 1 & 0 & \mathcal{D} \\
0 & 0 & 1 & 0 \\
0 & 0 & 0 & 1 \end{array} \right],
\mathbf{Q}= \tau \left[ \begin{array}{cccc}
\frac{\mathcal{D}^3}{3} & 0 & \frac{\mathcal{D}^2}{2} & 0 \\
0 & \frac{\mathcal{D}^3}{3} & 0 & \frac{\mathcal{D}^2}{2} \\
\frac{\mathcal{D}^2}{2} & 0 & \mathcal{D} & 0 \\
0 & \frac{\mathcal{D}^2}{2} & 0 & \mathcal{D} \end{array} \right],
\end{equation}

In \eqref{D_Q}, $\mathcal{D}$ and $\tau$ denote the sampling time interval and the process noise parameter respectively. We assume that the FC has complete knowledge of the process model in \eqref{track_model}.

We consider the isotropic power attenuation model for the target as,
\begin{equation}
a_{i,t}^2=\frac{P_0}{1+d_{i,t}^2}
\end{equation}where $a_{i,t}$ is the received signal amplitude at the $i^{th}$ sensor at time step $t$, $P_0$ is the emitted signal power from the target, and $d_{i,t}$ is the distance between the target and the $i^{th}$ sensor at time step $t$, i.e., $d_{i,t} = \sqrt{(x_t-x_i)^2 + (y_t-y_i)^2}$. At time step $t$, the received signal at sensor $i$ is given by
\begin{equation}
	\label{measurement}
z_{i,t}=a_{i,t} + n_{i,t}
\end{equation}
The measurement noise samples $n_{i,t}$ are assumed to be independent across time steps and across sensors and they follow   Gaussian distribution with parameters ${\cal N}(0,\sigma^2)$. In order to reduce the cost of communication, the sensor measurements, $z_{i,t}$'s, are quantized into  $m$-bits before transmission to the FC. The quantized measurement of sensor $i$ at time step $t$, $D_{i,t}$, is defined as:
\begin{equation}
D_{i,t} = \left\{ \begin{array}{rl}
  0 &\mbox{ $-\infty < z_{i,t} <\eta_{1}$} \\
  1 &\mbox{ $\eta_{1} < z_{i,t} < \eta_{2}$}\\
\vdots\\
  L-1 &\mbox{ $\eta_{(L-1)} < z_{i,t} <\infty$ }  \\
       \end{array} \right.
\end{equation}where $\boldsymbol{\eta} = [\eta_{0}, \eta_{1},\ldots,\eta_{L}]^T$ is the set of quantization thresholds with $\eta_{0} = -\infty$ and $\eta_{L} = \infty$ and $L=2^m$ is the number of quantization levels. For simplicity,
the quantization thresholds are designed identically according to the Fisher Information based heuristic quantization as in \cite{tsp06:niu}. Then, given target state $\mathbf{x}_t$ at time step $t$, the probability that $D_{i,t}$ takes value $l$ is,
\begin{equation}
	\label{eq4}
p(D_{i,t}=l|\mathbf{x}_t)=Q\left(\frac{\eta_{l}-a_i}{\sigma}\right)-Q\left(\frac{\eta_{l+1}-a_i}{\sigma}\right)
\end{equation}
where $Q(.)$ is the complementary distribution of the standard normal distribution. Given $\mathbf{x}_t$, the sensor measurements become conditionally independent, so the likelihood function of $\mathbf{D}_t = [D_{1,t}, D_{2,t},...,D_{N,t}]^T$ can be written as,
\begin{equation}
	\label{eq5}
p(\mathbf{D}_t|\mathbf{x}_t) = \prod_{i=1}^{N}p(D_{i,t}|\mathbf{x}_t)
\end{equation}

In our work, we consider that the FC reimburses the sensors for energy spent for transmission. 
By assuming that there are no errors in data transmission, a simple model of energy consumption
of sensor $i$ at time $t$ for transmitting $m$ bits over its distance from the FC $h_{i}$ is
considered as \cite{Heinzelman_energy}
\begin{equation}
\label{energy_model}
E_{i,t}^c (m, h_{i}) = \epsilon_{amp} \times m \times h_{i}^2
\end{equation}
where $\epsilon_{amp}$ is assumed to be $10^{-8} \mathbf{J/bit/m^2}$.
\begin{table}[t]
\center
\begin{tabular}{|c|c|c|} \hline
{\small \bf Notation} & {\small \bf Description}   \\ \hline
{\small $t$}         &  time step  \\ \hline
{\small $N$}         &  no. of sensors  \\ \hline
{\small $b$}         &  size of ROI  \\ \hline
{\small $\mathbf{x}_t$}  &  target state vector at time $t$; to be estimated  \\ \hline
{\small $(x_i, y_i)$}    &  location of sensor $i$ \\ \hline
{\small \multirow{2}{*}{$d_{i,t}$}}  &   distance between the target and\\ &the $i^{th}$ sensor at time step $t$ \\ \hline
{\small $D_{i,t}$}       &  quantized measurement of sensor $i$ at time step $t$ \\ \hline
{\small \multirow{2}{*}{$M$}}       &  total bandwidth constraint of the\\ &system at each time step \\ \hline
{\small $J_{i,t}^D$}           &  FIM obtained from the sensor $i$'s measurement \\ \hline
{\small $J_t^P$}           &  FIM of the \textit{a priori} information  \\ \hline
{\small $\mathbf{v}$}         &  sensors' valuation vector  \\ \hline
{\small $v_{FC}$}         &  FC's valuation per unit information  \\ \hline
{\small \multirow{2}{*}{$T$}}         &  set of all possible combinations\\ &of bidders' value estimates  \\ \hline
{\small \multirow{2}{*}{$T_{-i}$}}         &  set of all possible combinations of\\ &bidders' value estimates except sensor $i$  \\ \hline
{\small $f$}         &  probability distribution of bidders' value estimate \\ \hline
{\small $\mathbf{p}$}         &  payment vector  \\ \hline
{\small $\mathbf{q}$}         &  sensor's bit allocation variable vector  \\ \hline
{\small $E_{i,t}^c$}         &  sensor $i$'s energy consumption at time $t$  \\ \hline
{\small $\mathcal{U}_t^{FC}$}         &  FC's utility at time $t$  \\ \hline
{\small $\mathcal{U}_{i,t}$}         &  sensor $i$'s utility at time $t$  \\ \hline
{\small $\tilde{\mathcal{U}}_{i,t}$}         &  sensor $i$'s utility at time $t$ if it lied  \\ \hline

\end{tabular}
\caption{Notations used in the paper.}
\label{notationTable}
\end{table}

\subsection{Fisher Information with Quantized Measurements}
Posterior Cramer-Rao Lower Bound (PCRLB) provides the lower bound on estimation error variance for a Bayesian estimator \cite{trees2001modulation}, which is represented as the inverse of the Fisher Information Matrix (FIM),
\begin{equation}\label{pcrlb}
E\left\{[\hat{\mathbf{x}}_t - \mathbf{x}_t][\hat{\mathbf{x}}_t - \mathbf{x}_t]^T\right\} \geq J_t^{-1}
\end{equation}
where $\hat{\mathbf{x}}_t$ is the estimate of the target location, and the FIM is a function of the joint probability density function of the sensor measurements and the target location $p(\mathbf{D}_t,\mathbf{x}_t)$,
\begin{align}\label{fim}
J_t &= E[-\Delta_{\mathbf{x}_t}^{\mathbf{x}_t}\log p(\mathbf{D}_t,\mathbf{x}_t)] \\
&= E[-\Delta_{\mathbf{x}_t}^{\mathbf{x}_t} \log p(\mathbf{D}_t|\mathbf{x}_t)] + E[-\Delta_{\mathbf{x}_t}^{\mathbf{x}_t} \log p(\mathbf{x}_t)]\nonumber \\
& = J_t^D + J_t^P \nonumber
\end{align} where, expectation is taken with respect to $p(\mathbf{D}_t,\mathbf{x}_t)$, and $\Delta_{\mathbf{x}}^{\mathbf{x}} \triangleq \bigtriangledown_{\mathbf{x}} \bigtriangledown_{\mathbf{x}}^{T}$ is the second order derivative operator. In \eqref{fim}, the FIM is decomposed into two parts, where, $J_t^D$ represents the FIM corresponding to the sensor measurements, and $J_t^P$ represents the FIM corresponding to the \textit{a priori} information. The FIM corresponding to the sensor measurements, $J_t^D$, can be further written as the summation of each sensor's individual FIM \cite{engin_localization,engin_bit_allocation} as,
\begin{eqnarray}
\label{eq:JD_sum}
&& J_t^D = \sum_{i=1}^N J_{i,t}^D = \sum_{i=1}^N  \int_{\mathbf{x}_t}  J_{i,t}^S (\mathbf{x}_t) p(\mathbf{x}_t) d \mathbf{x}_t
\end{eqnarray} where $J_{i,t}^S(\mathbf{x}_t)$ represents the standard FIM corresponding to sensor $i$ and can be written as,
\begin{align} 
\label{eq:mle_fim}
J_{i,t}^S(\mathbf{x}_t) &= E[-\Delta_{\mathbf{x}_t}^{\mathbf{x}_t} \log p(D_i|\mathbf{x}_t)] = \frac{ 4\kappa_{i,t} a_{i,t}^2}{(1+d_i^2)^2} \times  \\
 &	\left[ \begin{array}{cccc}
	(x_i-x_t)^2  & (x_i-x_t)(y_i-y_t) & 0 & 0\\
	(x_i-x_t)(y_i-y_t)   & (y_i-y_t)^2 & 0 & 0\\
	0 & 0 & 0 & 0\\
	0 & 0 & 0 & 0\\
       \end{array} \right] \nonumber
\end{align}
where $$\kappa_{i,t} = \frac{1}{8\pi \sigma^2} \sum_{l=0}^{L-1}\frac{\left[ e^{-\frac{{(\eta_{l}-a_{i,t})}^2}{2\sigma^2}}-e^{-\frac{{(\eta_{(l+1)}-a_{i,t})}^2}{2\sigma^2}}\right]^2}{p(D_i=l |\mathbf{x}_t)}$$
A detailed derivation of $J_{i,t}^S (\mathbf{x}_t)$ can be found in \cite{engin_localization,engin_bit_allocation}.

\subsection{Particle Filtering based Target Tracking}
In this paper, we employ sequential importance resampling (SIR) particle filtering algorithm \cite{particle} to solve the nonlinear Bayesian filtering problem, where the main idea is to find a discrete representation of the posterior distribution $p(\mathbf{x}_{t}|\mathbf{D}_{1:t})$ by using a set of particles $\mathbf{x}_t^s$ with associated weights $w_t^s$,
\begin{equation}
\label{eq:particle_filter_original}
p(\mathbf{x}_{t}|\mathbf{D}_{1:t}) \approx \sum_{s=1}^{N_s} w_t^s \delta(\mathbf{x}_{t} - \mathbf{x}_{t}^s)
\end{equation} where, $N_s$ denotes the total number of particles. Algorithm \ref{alg1} provides a summary of SIR particle filtering for the target tracking problem, where $T$ denotes the number of time steps over which the target is tracked and $p(\mathbf{D}_{t+1}|\mathbf{x}_{t+1}^{s})$ has been obtained according to (\ref{eq4}) and (\ref{eq5}). Resampling step avoids the situation that all but one of the importance weights are close to zero \cite{particle}.

\begin{algorithm}                      
\caption{SIR Particle Filter for target tracking}          
\label{alg1}                           
\begin{algorithmic}[1]                    
\STATE Set $t = 0$. Generate initial particles $\mathbf{x}_{0}^{s} \sim p(\mathbf{x}_0)$ with $\forall s\;, w_0^s = N_s^{-1}$.
\WHILE {$t \leq T$}
\STATE $\mathbf{x}_{t+1}^{s} = \mathbf{F}\mathbf{x}_{t}^{s} +  \mathbf{\upsilon}_t$ (Propagating particles)
\STATE $p(\mathbf{x}_{t+1}|\mathbf{D}_{1:t}) = \frac{1}{N_s}\sum_{s=1}^{N_s} \delta(\mathbf{x}_{t+1} - \mathbf{x}_{t+1}^{s}) $
\STATE Obtain sensor data $\mathbf{D}_{t+1}$
\STATE $w_{t+1}^{s} \propto  p(\mathbf{D}_{t+1}|\mathbf{x}_{t+1}^{s})$ (Updating weights)
\STATE $w_{t+1}^{s} = \frac{w_{t+1}^{s}}{\sum_{s=1}^{N_s} w_{t+1}^{s}}$ (Normalizing weights)
\STATE $\mathbf{\hat{x}}_{t+1} = \sum_{s=1}^{N_s} w_{t+1}^{s} \mathbf{x}_{t+1}^{s}$
\STATE $\{\mathbf{x}_{t+1}^{s},N_s^{-1}\} = \textrm{Resampling}(\mathbf{x}_{t+1}^{s},w_{t+1}^{s}) $
\STATE $t = t+1$
  \ENDWHILE
\end{algorithmic}
\end{algorithm}


\section{Formulation of the Auction Design Problem}
Our problem belongs to the general area of mechanism design~\cite{algoMechDesign}.
Below we first describe the mechanism design problem in general before formulating our auction in the context of sensor management for tracking.

\subsection{Mechanism Design}
\label{algoMechDesign}
Consider $n$ agents where each agent $i \in \{1,\cdots,n\}$ has some private information which is referred to as his type $t_i$. An output
specification maps each type vector $t = (t_1,\cdots, t_n)$ to a set of allowed outputs.
Depending on his private information, each agent has his own preferences over the possible outputs.
The preferences of agent $i$ are given by a valuation function $v_i$ that assigns a real number $v_i(t_i,q)$ to
each possible output $q$. Each agent $i$ reports his type as $\hat{t}_i$ to the mechanism.
Based on the vector of announced types $\hat{t} = (\hat{t}_1,\cdots,\hat{t}_n)$, the mechanism computes an output
$q(\hat{t})$ and a payment $p_i(\hat{t})$ to each of the agents. The utility of agent
$i$ is $p_i(\hat{t}) + v_i(t_i, q(\hat{t}))$, which the agents wants to optimize.
The following two properties should be exhibited by the mechanism.
\begin{itemize}
\item {\em Incentive Compatibility:} Each agent should be able to maximize his utility by reporting his true type
    $t_i$ to the mechanism so that the mechanism is truthful. In other words,
    \begin{equation}
    p_i(t_{i},\hat{t}_{-i}) + v_i(t_i,q(t_i,\hat{t}_{-i})) \geq p_i(\hat{t}_i,\hat{t}_{-i}) + v_i(t_i,q(\hat{t}_i, \hat{t}_{-i})) \nonumber
    \end{equation}
\item {\em Individual Rationality:} The utility of an agent should be non-negative, so that it is rational for him to participate in the game.
\end{itemize}

\subsection{Our Auction Model}
The sensors, in our model, compete to sell the information contained in their measurements to the FC,
and comprise the set of 
bidders (potential sellers) in the sensor network.
We assume that each bidder $i$ has a valuation $v_i$ per unit of energy,
and that $v_i$ is the true valuation of $i$.
Further, we assume that the FC will derive a benefit from performing the location estimation
and assume that the valuation of the FC per unit of information of the selected sensors is $v_{FC}$~\footnote{$v_{FC}$, for instance, can reflect the
valuation of the entity trying to find the lost/stolen object as discussed
in~\cite{bikeTrack,mobileObjectAlgo,ObjectsCallingHome}}.
The FC is assumed to be unaware of the true valuations of the sensors
so that the sensors have to \textit{advertise} their valuations
at the beginning of the target tracking process
to the FC.
This gives 
the sensors an opportunity to lie about their valuations hoping for an extra benefit.
For instance, a sensor may understate its valuation per unit energy
in the hope of making the FC buy information with finer quantization (larger number of bits), which countervails its loss for announcing valuation lower than the truthful one,
than what the
FC should optimally buy it at.
Or it may exaggerate its valuation
that might increase the payment made to the sensor sufficiently to compensate for any
resulting decrease in the resolution of the information bought.

We 
assume that
the FC's uncertainty about the value estimate of bidder $i$
can be described by a continuous probability distribution $f_i: [a_i,b_i] \rightarrow \mathbf{R}_+$
over a finite interval $[a_i,b_i]$, where $a_i$ is the lowest possible value which $i$ might
assign to its data, and $b_i$ is the highest possible value which $i$ might assign to its data,
and $-\infty \leq a_i \leq b_i \leq \infty$. $F_i: [a_i,b_i] \rightarrow [0,1]$ denotes the
cumulative distribution function, where $F_i(v_i) = \int_{a_i}^{v_i} f_i(t_i)d t_i$. 
We let $T$ denote the set of all possible combinations of bidders' value estimates:
$$T = [a_1, b_1] \times \ldots \times [a_n, b_n]$$
Also, 
for any bidder $i$, the set of all the combinations of the other bidders' value estimates is
$$T_{-i} = [a_1, b_1] \times \ldots \times[a_{i-1}, b_{i-1}] \times [a_{i+1}, b_{i+1}]\times \ldots\times [a_n, b_n]$$
The value estimates of the $N$ sensors are assumed to be statistically independent random variables.
Thus, the joint pdf of the vector $\mathbf{v}=(v_1,\ldots,v_N)$ is
\begin{equation}
f(\mathbf{v}) = \prod_{j \in \left\{1,2,\ldots N\right\}} f_j(v_j)
\end{equation}
We assume that bidder $i$ treats other sensors' value estimates in a similar way as the FC does.
Thus, both the FC and the bidder $i$ consider the joint pdf of the vector of values for
all the sensors other than $i$ $(v_1,\ldots,v_{i-1},v_{i+1},\ldots,v_N)$ to be
\begin{equation}
f_{-i}(\mathbf{v}_{-i}) = \prod_{j \in \left\{1, \ldots, i-1, i+1, N\right\}} f_j(v_j)
\end{equation}

\subsection{Problem Formulation}
\label{pro_form}
Based on the above definitions and assumptions, we consider a direct revelation mechanism, where the bidders simultaneously and confidentially announce their value estimates to the FC. 
The FC then determines the number of bits it should buy from each sensor and how much it should pay them. Thus, our objective is to maximize the FC's utility as a function of the bit allocations and the payment vector. We also assume that the FC and the sensors are risk neutral. By using the trace of the FIM as the metric of tracking performance, the sensors have additively separable utility for money and the commodity (information) being traded \cite{optimal_auction_myerson}.
\subsubsection{Utility Functions}
At time step $t$, we define the expected utility $\mathcal{U}_t^{FC}$ for the FC from the auction mechanism as
\begin{equation}\label{eq1}
\begin{aligned}
\mathcal{U}_t^{FC} (\mathbf{p}, \mathbf{q}) &= \int_{T} \left[v_{FC} \operatorname{tr} \left(\sum\limits_{i=1}^{N} \sum\limits_{m=0}^{M} q_{i,m} (\mathbf{v}) J_{i,t}^D(q_{i,m} = m) \right.\right.\\
&\qquad \qquad \left. \left.+ J_t^P\right) - \sum\limits_{i=1}^{N} p_i(\mathbf{v})\right]f(\mathbf{v})\mathrm{d} \mathbf{v}
\end{aligned}
\end{equation}
where $\mathbf{p} = [p_1,\ldots,p_N]$ is the payment vector and $p_i$ is the expected payment that the FC makes to sensor $i$. $\mathbf{q} = [\mathbf{q}_1^T, \ldots, \mathbf{q}_N^T]^T$ and $\mathbf{q}_i = [q_{i,0}, \ldots,q_{i,m},\ldots,q_{i,M}]^T$ are both Boolean vectors where $\mathbf{q}$ represents the bit allocation state of all the sensors and $\mathbf{q}_i$ represents the bit allocation state of sensor $i$, i.e., $q_{i,m} =1$ when sensor $i$ transmits $m$ bits, and $q_{i,m} = 0$ if sensor $i$ does not transmit its data to the FC in $m$ bits. Thus $\sum\limits_{m=0}^{M} m q_{i,m}$ is the number of bits allocated to sensor $i$. Note that both $\mathbf{p}$ and $\mathbf{q}$ are functions of the vector of announced value estimates $\mathbf{v}=\left[ v_1,\ldots,v_N \right]$, and we ignore the time index $t$ for $\mathbf{p}$, $\mathbf{q}$ and the values $\mathbf{v}$ for notational simplicity.
Since sensor $i$ knows that its value estimate is $v_i$, its expected utility $\mathcal{U}_{i,t} (p_i, \mathbf{q}_i, v_i)$ at time $t$ is described as
\begin{equation}\label{eq2}
\mathcal{U}_{i,t} (p_i,\mathbf{q}_i, v_i) = \int_{T_{-i}} \left[ p_i(\mathbf{v}) -  v_i E_{i,t}^c(\mathbf{q}_i, \mathbf{v}) \right] f_{-i}(\mathbf{v}_{-i}) \mathrm{d} \mathbf{v}_{-i}
\end{equation}
where $\mathrm{d} \mathbf{v}_{-i} = \mathrm{d} v_1 \ldots \mathrm{d} v_{i-1} \mathrm{d} v_{i+1} \ldots \mathrm{d} v_n$. As shown in \eqref{energy_model}, $E_{i,t}^c(\mathbf{q}_i(\mathbf{v}), h_i) = \epsilon_{amp} \times (\sum\limits_{m=0}^{M} m q_{i,m}) \times h_i^2 $, where $h_i$ is not a variable, so here we use simplified notation of $E_{i,t}^c(\mathbf{q}_i(\mathbf{v}), h_i)$ as $E_{i,t}^c(\mathbf{q}_i, \mathbf{v})$.
On the other hand, if sensor $i$ claimed that $w_i$ was its value estimate when $v_i$ was its true value estimate, its expected utility $\tilde{\mathcal{U}}_i$ would be
\begin{equation*}
\tilde{\mathcal{U}}_{i,t} = \int_{T_{-i}} \left[ p_i(w_i, \mathbf{v}_{-i}) -  v_i E_{i,t}^c(\mathbf{q}_i, w_i, \mathbf{v}_{-i})\right] f_{-i}(\mathbf{v}_{-i})\mathrm{d} \mathbf{v}_{-i}
\end{equation*}
where $(w_i, \mathbf{v}_{-i}) = (v_1, \ldots v_{i-1}, w_i, v_{i+1} \ldots v_n)$.

\subsubsection{The Optimization Problem}
Thus, the auction mechanism based bit allocation problem at time step $t$ can be explicitly formulated as follows:
\begin{subequations}\label{eq_opt}
\begin{align}
& \underset{\mathbf{q}}{\text{maximize}} & & \mathcal{U}_t^{FC} (\mathbf{p}, \mathbf{q}) \nonumber\\
& \text{subject to} & & \mathcal{U}_{i,t} (p_i, \mathbf{q}_i, v_i) \geq 0, \hspace{3pt}  i \in \left\{1, \ldots N \right\}\label{eq_opt_a}\\
& & &\qquad \qquad  \qquad \hspace{5pt}  m \in \left\{0, \ldots M \right\}, \hspace{3pt} \forall v_i \in \left[ a_i, b_i \right] \nonumber\\
& & & \mathcal{U}_{i,t} \geq \tilde{\mathcal{U}}_{i,t},  \hspace{10pt}  i \in \left\{1, \ldots N \right\}\label{eq_opt_b}\\
& & & \sum\limits_{i=1}^{N} \sum\limits_{m=0}^{M} m q_{i,m} \leq M\label{eq_opt_c}\\
& & & \sum\limits_{m=0}^{M} q_{i,m} = 1, \hspace{10pt} i \in \left\{1, \ldots N \right\}\label{eq_opt_d} \\
& & & q_{i,m} \in \{0,1\}, \hspace{10pt} i \in \left\{1, \ldots N \right\}, m \in \left\{1, \ldots M \right\}\label{eq_opt_e}
\end{align}
\end{subequations}
Below we describe each constraint in detail.
\begin{itemize}
\item \textit{Individual-Rationality (IR) constraint~\eqref{eq_opt_a}}: We assume that the FC cannot force a sensor to participate in an auction. If it did not participate in the auction, the sensor would not get paid, but also would not have any energy cost, so its utility would be zero. Thus, to guarantee that the sensors will participate in the auction, this condition must be satisfied.
\item \textit{Incentive-Compatibility (IC) constraint~\eqref{eq_opt_b}:} We assume that the FC can not prevent any sensor from lying about its value estimate if the sensor is expected to gain from lying.
Thus, to prevent sensors from lying,
honest responses must form a Nash equilibrium in the auction game.
\item \textit{Bandwidth Limitation (BL) constraint~\eqref{eq_opt_c}:} The FC can buy no more than $M$ bits from all the sensors.
\item \textit{Number of quantization Levels (NQL) constraint~\eqref{eq_opt_d}:}
    Each sensor uses only one quantization level.
\item $q_{i,m}$~\eqref{eq_opt_e} is a Boolean variable.
\end{itemize}


\section{Analysis of the Auction Design Problem}

In this section, we analyze the optimization problem proposed in Section \ref{pro_form}. We define
\begin{equation}\label{eq6}
\mathcal{B}_{i,t} (\mathbf{q}_i,v_i) = \int_{T_{-i}} E_{i,t}^c(\mathbf{q}_i, \mathbf{v}) f_{-i}(\mathbf{v}_{-i}) \mathrm{d} \mathbf{v}_{-i}
\end{equation}
at time step $t$ for any sensor $i$ with value estimate $v_i$. So $\mathcal{B}_{i,t} (q_i,v_i)$ denotes the expected amount of energy that sensor $i$ would spend for communication with the FC conditioned on the valuations of the other sensors $\mathbf{v}_{-i}$.

Our first result is a simplified characterization of the IC constraint of the feasible auction mechanism.

\begin{lem}
The IC constraint holds if and only if the following conditions hold:
\begin{equation}\label{eq7}
1~~if \hspace{15pt} v_i \leq w_i \hspace{15pt} then \hspace{15pt} \mathcal{B}_{i,t} (\mathbf{q}_i,v_i) \geq \mathcal{B}_{i,t} (\mathbf{q}_i,w_i)
\end{equation}
\begin{equation}\label{eq8}
2~~\mathcal{U}_{i,t} (p_i, \mathbf{q}_i, v_i) = \mathcal{U}_{i,t} (p_i, \mathbf{q}_i, b_i) + \int\limits_{v_i}^{b_i} \mathcal{B}_{i,t} (\mathbf{q}_i,v_i) \mathrm{d} v_i
\end{equation}
%
%

\end{lem}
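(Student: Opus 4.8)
The plan is to run the classical Myerson one--dimensional characterization, exploiting the fact that the energy cost in \eqref{energy_model} is linear in the valuation $v_i$. First I would collapse the expected--payment term by writing $P_{i,t}(v_i)=\int_{T_{-i}} p_i(\mathbf{v})\,f_{-i}(\mathbf{v}_{-i})\,\mathrm{d}\mathbf{v}_{-i}$, so that \eqref{eq2} and \eqref{eq6} give $\mathcal{U}_{i,t}(p_i,\mathbf{q}_i,v_i)=P_{i,t}(v_i)-v_i\,\mathcal{B}_{i,t}(\mathbf{q}_i,v_i)$ and $\tilde{\mathcal{U}}_{i,t}=P_{i,t}(w_i)-v_i\,\mathcal{B}_{i,t}(\mathbf{q}_i,w_i)$. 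Substituting $P_{i,t}(w_i)=\mathcal{U}_{i,t}(p_i,\mathbf{q}_i,w_i)+w_i\,\mathcal{B}_{i,t}(\mathbf{q}_i,w_i)$, the IC constraint $\mathcal{U}_{i,t}(p_i,\mathbf{q}_i,v_i)\ge\tilde{\mathcal{U}}_{i,t}$ reduces to the single equivalent family
\[
\mathcal{U}_{i,t}(p_i,\mathbf{q}_i,v_i)\ \ge\ \mathcal{U}_{i,t}(p_i,\mathbf{q}_i,w_i)+(w_i-v_i)\,\mathcal{B}_{i,t}(\mathbf{q}_i,w_i)\qquad\forall\,v_i,w_i\in[a_i,b_i].
\]

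For the ``only if'' part I would apply this inequality twice --- once as written and once with $v_i$ and $w_i$ interchanged --- and add the two; the utility terms cancel, leaving $(w_i-v_i)\big[\mathcal{B}_{i,t}(\mathbf{q}_i,w_i)-\mathcal{B}_{i,t}(\mathbf{q}_i,v_i)\big]\le 0$, which for $v_i\le w_i$ is precisely the monotonicity condition \eqref{eq7}. To obtain \eqref{eq8} I would observe that the displayed inequality exhibits $v_i\mapsto\mathcal{U}_{i,t}(p_i,\mathbf{q}_i,v_i)$ as the pointwise supremum over $w_i$ of the affine maps $v_i\mapsto P_{i,t}(w_i)-v_i\,\mathcal{B}_{i,t}(\mathbf{q}_i,w_i)$, hence it is convex in $v_i$; the envelope argument then identifies its derivative, at every point of differentiability, with the slope $-\mathcal{B}_{i,t}(\mathbf{q}_i,v_i)$ of the supporting line attained at $w_i=v_i$, and integrating from $v_i$ to $b_i$ gives \eqref{eq8}.

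For the ``if'' part I would substitute \eqref{eq8} into the reduced IC inequality. Since \eqref{eq8} yields $\mathcal{U}_{i,t}(p_i,\mathbf{q}_i,v_i)-\mathcal{U}_{i,t}(p_i,\mathbf{q}_i,w_i)=\int_{v_i}^{w_i}\mathcal{B}_{i,t}(\mathbf{q}_i,s)\,\mathrm{d}s$, the gap $\mathcal{U}_{i,t}(p_i,\mathbf{q}_i,v_i)-\tilde{\mathcal{U}}_{i,t}$ equals $\int_{v_i}^{w_i}\big[\mathcal{B}_{i,t}(\mathbf{q}_i,s)-\mathcal{B}_{i,t}(\mathbf{q}_i,w_i)\big]\,\mathrm{d}s$. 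If $v_i\le w_i$, every $s$ in the range satisfies $s\le w_i$, so \eqref{eq7} makes the integrand nonnegative and the integral nonnegative; if $v_i>w_i$, the orientation of the integral and the sign of the integrand both reverse, so the integral is again nonnegative. Hence the reduced IC inequality, and therefore the original IC constraint, holds for every report, completing the equivalence.

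The step I expect to be the main obstacle is the passage to \eqref{eq8}: because $\mathcal{B}_{i,t}(\mathbf{q}_i,\cdot)$ is only guaranteed to be monotone, $\mathcal{U}_{i,t}(p_i,\mathbf{q}_i,\cdot)$ is convex but possibly non--smooth, so the envelope step must be phrased through the absolute continuity of convex functions on the interior of their domain and the fact that such a function is the integral of its almost--everywhere--defined derivative. I would justify the integral representation that way, noting that at every differentiability point the supporting--line argument forces the derivative to be $-\mathcal{B}_{i,t}(\mathbf{q}_i,v_i)$; everything else is routine bookkeeping with the linear--in--$v_i$ structure of $E_{i,t}^c$.
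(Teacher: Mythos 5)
Your proposal is correct and follows essentially the same route as the paper: both reduce IC to the single family of inequalities $\mathcal{U}_{i,t}(p_i,\mathbf{q}_i,v_i)\ge\mathcal{U}_{i,t}(p_i,\mathbf{q}_i,w_i)+(w_i-v_i)\mathcal{B}_{i,t}(\mathbf{q}_i,w_i)$, extract \eqref{eq7} by symmetrizing in $v_i,w_i$, and prove the converse by the same case-split integral comparison. The only divergence is in justifying \eqref{eq8}: the paper uses the sandwich of difference quotients with $\delta\to 0$ and then asserts the integral representation from monotonicity of $\mathcal{B}_{i,t}$, whereas you obtain it from convexity of the pointwise supremum plus the envelope/absolute-continuity argument --- a slightly more rigorous treatment of the same step.
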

\begin{proof}
We first show the ``only if'' part. Without loss of generality, consider that $v_i \leq w_i$. We first consider the case that bidder $i$ claimed that $w_i$ is his value estimate, when $v_i$ is its true value estimate.
\begin{equation*}
\begin{aligned}
&\mathcal{U}_{i,t} (p_i, \mathbf{q}_i, v_i) = \int_{T_{-i}} \left[ p_i(\mathbf{v}) -  E_{i,t}^c(\mathbf{q}_i, \mathbf{v}) v_i \right] f_{-i}(\mathbf{v}_{-i}) \mathrm{d} \mathbf{v}_{-i} \\
& \geq \int_{T_{-i}} \left[ p_i(\mathbf{v}_{-i}, w_i) -  E_{i,t}^c(\mathbf{q}_i,\mathbf{v}_{-i}, w_i) v_i \right] f_{-i}(\mathbf{v}_{-i}) \mathrm{d} \mathbf{v}_{-i} \\
& = \int_{T_{-i}} \left[ p_i(\mathbf{v}_{-i}, w_i) -  E_{i,t}^c(\mathbf{q}_i,\mathbf{v}_{-i}, w_i) w_i \right] f_{-i}(\mathbf{v}_{-i}) \mathrm{d} \mathbf{v}_{-i} \\
& + \int_{T_{-i}} \left[ w_i E_{i,t}^c(\mathbf{q}_i,\mathbf{v}_{-i}, w_i) \right] f_{-i}(\mathbf{v}_{-i}) \mathrm{d} \mathbf{v}_{-i} \\
& - \int_{T_{-i}} \left[ v_i E_{i,t}^c(\mathbf{q}_i,\mathbf{v}_{-i}, w_i) \right] f_{-i}(\mathbf{v}_{-i}) \mathrm{d} \mathbf{v}_{-i} \\
& = \mathcal{U}_{i,t} (p_i, \mathbf{q}_i, w_i) \\
& \qquad + (w_i - v_i) \int_{T_{-i}} E_{i,t}^c(\mathbf{q}_i,\mathbf{v}_{-i}, w_i) f_{-i}(\mathbf{v}_{-i}) \mathrm{d} \mathbf{v}_{-i}
\end{aligned}
\end{equation*}
So we can get,
\begin{equation}\label{eq10}
 \mathcal{U}_{i,t} (p_i, \mathbf{q}_i, v_i) \geq  \mathcal{U}_{i,t} (p_i, \mathbf{q}_i, w_i) + (w_i - v_i) \mathcal{B}_{i,t} (\mathbf{q}_i,w_i)
\end{equation}

Thus, the IC constraint is equivalent to \eqref{eq10}. We now show that \eqref{eq10} implies \eqref{eq7} and \eqref{eq8}. By switching the roles of $v_i$ and $w_i$, we have
\begin{equation}\label{eq10_2}
 \mathcal{U}_{i,t} (p_i, \mathbf{q}_i, w_i) \geq  \mathcal{U}_{i,t} (p_i, \mathbf{q}_i, v_i) + (v_i - w_i) \mathcal{B}_{i,t} (\mathbf{q}_i,v_i)
\end{equation}
Combining \eqref{eq10} and \eqref{eq10_2}, we can see that
\begin{equation*}
\begin{aligned}
(w_i - v_i) \mathcal{B}_{i,t} (\mathbf{q}_i,w_i)  &\leq \mathcal{U}_{i,t} (p_i, \mathbf{q}_i, v_i) - \mathcal{U}_{i,t} (p_i, \mathbf{q}_i, w_i) \\
&\leq (w_i - v_i) \mathcal{B}_{i,t} (\mathbf{q}_i,v_i)
\end{aligned}
\end{equation*}
from which we can derive \eqref{eq7}.

Define $\delta=w_i-v_i$, these inequalities can be written for any $\delta \to 0$
\begin{equation*}
\begin{aligned}
\delta \mathcal{B}_{i,t} (\mathbf{q}_i,v_i + \delta)  &\leq \mathcal{U}_{i,t} (p_i, \mathbf{q}_i, v_i) - \mathcal{U}_{i,t} (p_i, \mathbf{q}_i, v_i + \delta)\\ 
&\leq \delta \mathcal{B}_{i,t} (\mathbf{q}_i,v_i)
\end{aligned}
\end{equation*}
Thus $\mathcal{B}_{i,t} (\mathbf{x},v_i)$ is a decreasing function and it is, therefore, Riemann integrable. We then write the utility function of sensor $i$ for all $v_i \in \left[ a_i, b_i \right]$ as
\begin{equation*}
\mathcal{U}_{i,t} (p_i, \mathbf{q}_i, v_i) = \mathcal{U}_{i,t} (p_i, \mathbf{q}_i, b_i) + \int\limits_{v_i}^{b_i} \mathcal{B}_{i,t} (\mathbf{q}_i,v_i) \mathrm{d} v_i
\end{equation*}
which gives us \eqref{eq8}.

Now we must show the ``if'' part of Lemma 1, i.e., the conditions in Lemma 1 also imply the IC constraint.
Suppose $v_i \leq w_i$, then \eqref{eq7} and \eqref{eq8} give us:
\begin{equation*}
\begin{aligned}
\mathcal{U}_{i,t} (p_i, \mathbf{q}_i, v_i) &= \mathcal{U}_{i,t} (p_i, \mathbf{q}_i, w_i) + \int\limits_{v_i}^{w_i} \mathcal{B}_{i,t} (\mathbf{q}_i,r_i) \mathrm{d} r_i  \\
&  \geq \mathcal{U}_{i,t} (p_i, \mathbf{q}_i, w_i) + \int\limits_{v_i}^{w_i} \mathcal{B}_{i,t} (\mathbf{q}_i,w_i) \mathrm{d} r_i  \\
&  =  \mathcal{U}_{i,t} (p_i, \mathbf{q}_i, w_i) + (w_i - v_i) \mathcal{B}_{i,t} (\mathbf{q}_i,w_i)
\end{aligned}
\end{equation*}
Similarly, if $v_i \geq w_i$,
\begin{equation*}
\begin{aligned}
\mathcal{U}_{i,t} (p_i, \mathbf{q}_i, v_i) &= \mathcal{U}_{i,t} (p_i, \mathbf{q}_i, w_i) - \int\limits_{w_i}^{v_i} \mathcal{B}_{i,t} (\mathbf{q}_i,r_i) \mathrm{d} r_i  \\
&  \geq \mathcal{U}_{i,t} (p_i, \mathbf{q}_i, w_i) - \int\limits_{w_i}^{v_i} \mathcal{B}_{i,t} (\mathbf{q}_i,w_i) \mathrm{d} r_i  \\
&  =  \mathcal{U}_{i,t} (p_i, \mathbf{q}_i, w_i) + (w_i - v_i) \mathcal{B}_{i,t} (\mathbf{q}_i,w_i)
\end{aligned}
\end{equation*}
So \eqref{eq10} can be derived from \eqref{eq7} and \eqref{eq8}. Thus, the conditions in Lemma 1 also imply the IC constraint. This proves the lemma.
\end{proof}

\subsection{Optimal Auction Based Bit Allocation Mechanism}
Based on Lemma 1, problem \eqref{eq_opt} can be simplified as follows.
\begin{thm}
The optimal auction of \eqref{eq_opt} is equivalent to
\begin{equation}\label{eq11}
\begin{aligned}
& \underset{\mathbf{q}}{\text{maximize}} & & \int_{T} \mathcal{Y}_t(\mathbf{q}, \mathbf{v}) f(\mathbf{v}) \mathrm{d} \mathbf{v}\\
& \text{subject to} & & \sum\limits_{i=1}^{N} \sum\limits_{m=0}^{M} m q_{i,m} \leq M\\
& & & \sum\limits_{m=0}^{M} q_{i,m} = 1, \hspace{10pt} i \in \left\{1, \ldots N \right\} \\
& & & q_{i,m} \in \{0,1\}, \hspace{10pt} i \in \left\{1, \ldots N \right\}, m \in \left\{1, \ldots M \right\}
\end{aligned}
\end{equation}
where $\mathcal{Y}_t(\mathbf{q}, \mathbf{v}) =  v_{FC} \operatorname{tr}\left(\sum\limits_{i=1}^{N} \sum\limits_{m=0}^{M} q_{i,m} (\mathbf{v}) J_{i,t}^D(q_{i,m} = m) + J_t^P\right) - \sum\limits_{i=1}^{N} E_{i,t}^c(\mathbf{q}_i, \mathbf{v}) \left(v_i + \frac{F_i (v_i)}{f_i(v_i)}\right)$ and the payment to sensor $i$ is given by
\begin{equation}\label{eq12}
p_i(\mathbf{v}) =  v_i E_{i,t}^c(\mathbf{q}_i, \mathbf{v}) +  \int\limits_{v_i}^{b_i} E_{i,t}^c(\mathbf{q}_i,\mathbf{v}_{-i}, r_i) \mathrm{d} r_i
\end{equation}
\end{thm}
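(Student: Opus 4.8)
The plan is to run the classical optimal-auction (Myerson-type) argument: use Lemma~1 to remove the incentive-compatibility constraint \eqref{eq_opt_b}, and then eliminate the payment vector $\mathbf{p}$ from the objective by writing it in terms of the allocation $\mathbf{q}$. First I would note that, since $\mathcal{B}_{i,t}(\mathbf{q}_i,v_i)\ge 0$, equation \eqref{eq8} shows that $\mathcal{U}_{i,t}(p_i,\mathbf{q}_i,\cdot)$ is non-increasing in the value estimate, so the IR constraint \eqref{eq_opt_a} is tightest at $v_i=b_i$ and reduces to $\mathcal{U}_{i,t}(p_i,\mathbf{q}_i,b_i)\ge 0$. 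For any fixed feasible $\mathbf{q}$, combining \eqref{eq2} and \eqref{eq8} gives $\int_{T_{-i}} p_i(\mathbf{v})f_{-i}(\mathbf{v}_{-i})\,\mathrm{d}\mathbf{v}_{-i} = v_i\mathcal{B}_{i,t}(\mathbf{q}_i,v_i) + \mathcal{U}_{i,t}(p_i,\mathbf{q}_i,b_i) + \int_{v_i}^{b_i}\mathcal{B}_{i,t}(\mathbf{q}_i,r_i)\,\mathrm{d}r_i$, so the expected total payment is minimized -- and hence $\mathcal{U}_t^{FC}$ in \eqref{eq1} maximized -- exactly when $\mathcal{U}_{i,t}(p_i,\mathbf{q}_i,b_i)=0$ for every $i$.

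Next I would pin down the payment rule. The displayed identity only fixes the conditional-expected payment; I would verify that the pointwise rule \eqref{eq12} implements it by substituting \eqref{eq12} into \eqref{eq2}: the $v_iE_{i,t}^c$ terms cancel, and interchanging the $\mathbf{v}_{-i}$-integral with the $r_i$-integral (using \eqref{eq6}) leaves $\int_{v_i}^{b_i}\mathcal{B}_{i,t}(\mathbf{q}_i,r_i)\,\mathrm{d}r_i$, i.e. $\mathcal{U}_{i,t}(p_i,\mathbf{q}_i,b_i)=0$ as required; moreover \eqref{eq12} is non-negative for every $\mathbf{v}$, so the resulting mechanism is ex-post individually rational.

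The third step is the substitution into the objective. Integrating the expected payment over $v_i$ gives $\int_{a_i}^{b_i}\big[v_i\mathcal{B}_{i,t}(\mathbf{q}_i,v_i)+\int_{v_i}^{b_i}\mathcal{B}_{i,t}(\mathbf{q}_i,r_i)\,\mathrm{d}r_i\big] f_i(v_i)\,\mathrm{d}v_i$; applying Fubini to the double integral, $\int_{a_i}^{b_i}\big(\int_{v_i}^{b_i}\mathcal{B}_{i,t}(\mathbf{q}_i,r_i)\,\mathrm{d}r_i\big)f_i(v_i)\,\mathrm{d}v_i=\int_{a_i}^{b_i}\mathcal{B}_{i,t}(\mathbf{q}_i,r_i)F_i(r_i)\,\mathrm{d}r_i$, collapses the bracket to $\mathcal{B}_{i,t}(\mathbf{q}_i,v_i)\big(v_i+F_i(v_i)/f_i(v_i)\big)f_i(v_i)$. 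Unfolding $\mathcal{B}_{i,t}$ back into its integral over $T_{-i}$ via \eqref{eq6} turns the total expected payment into $\sum_{i=1}^N\int_T E_{i,t}^c(\mathbf{q}_i,\mathbf{v})\big(v_i+F_i(v_i)/f_i(v_i)\big)f(\mathbf{v})\,\mathrm{d}\mathbf{v}$; plugging this into \eqref{eq1} yields $\mathcal{U}_t^{FC}=\int_T\mathcal{Y}_t(\mathbf{q},\mathbf{v})f(\mathbf{v})\,\mathrm{d}\mathbf{v}$, the objective of \eqref{eq11}, while the bandwidth, one-level and Boolean constraints carry over verbatim.

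Finally it remains to discharge the monotonicity condition \eqref{eq7}, which Lemma~1 still requires. Since \eqref{eq11} couples $\mathbf{q}(\mathbf{v})$ across different realizations of $\mathbf{v}$ only through the bandwidth budget imposed separately at each $\mathbf{v}$, its optimum is attained by maximizing $\mathcal{Y}_t(\mathbf{q},\mathbf{v})$ pointwise in $\mathbf{v}$. Because $E_{i,t}^c(\mathbf{q}_i,\mathbf{v})=\epsilon_{amp}h_i^2\sum_m mq_{i,m}$ is increasing in the number of bits assigned to sensor $i$, raising $v_i$ (with $\mathbf{v}_{-i}$ held fixed) only increases the per-bit penalty $v_i+F_i(v_i)/f_i(v_i)$ -- here I would invoke the usual regularity hypothesis that this virtual cost is non-decreasing in $v_i$ -- so a pointwise maximizer assigns weakly fewer bits to $i$, making $E_{i,t}^c(\mathbf{q}_i,\mathbf{v})$, and hence $\mathcal{B}_{i,t}(\mathbf{q}_i,v_i)$, non-increasing in $v_i$; ties can be resolved so as to preserve this. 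Thus \eqref{eq7} is automatically satisfied at the optimum of \eqref{eq11}, establishing the equivalence. I expect this last step -- controlling the monotonicity of the discrete bit-allocation map together with the attendant regularity assumption on the $f_i$ -- to be the main obstacle; everything else is bookkeeping with Fubini's theorem and the identities of Lemma~1.
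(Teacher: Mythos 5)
Your proof follows essentially the same route as the paper's: a Myerson-style derivation in which the integral representation \eqref{eq8} from Lemma~1, combined with Fubini's theorem, converts the expected payment into the virtual-cost term $E_{i,t}^c(\mathbf{q}_i,\mathbf{v})\left(v_i + F_i(v_i)/f_i(v_i)\right)$, the IR constraint forces $\mathcal{U}_{i,t}(p_i,\mathbf{q}_i,b_i)=0$ at the optimum, and the payment rule \eqref{eq12} is read off from the resulting identity. The one place you go beyond the paper is your final step: the paper's proof uses only \eqref{eq8} and never checks that the allocation solving \eqref{eq11} satisfies the monotonicity condition \eqref{eq7}, which Lemma~1 also requires for the reduction to be an equivalence (i.e., for the relaxed solution to remain incentive-compatible). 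You correctly identify this as the real obstacle and discharge it via pointwise maximization of $\mathcal{Y}_t$ under the standard regularity hypothesis that $v_i + F_i(v_i)/f_i(v_i)$ is non-decreasing --- a hypothesis the theorem does not state, though it holds for the uniform distributions used in the paper's simulations. In that one respect your argument is more complete than the published proof; everything else matches it step for step.
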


\begin{proof}
We may write the FC's objective function \eqref{eq1} as
\begin{equation*}
\begin{aligned}
&\mathbf{U}_{FC} (p_i, \mathbf{q}) \\
&=\int_{T} \left[v_{FC} \left(\sum\limits_{i=1}^{N} \sum\limits_{m=0}^{M} q_{i,m} (\mathbf{v}) J_i^D(q_{i,m} = m) + J^P\right)\right.\\
&\qquad \quad  \left.  - \sum\limits_{i=1}^{N}p_i(\mathbf{v})\right]f(\mathbf{v}) \mathrm{d} \mathbf{v} + \sum\limits_{i=1}^{N}\int_{T} v_i E_{i,t}^c(\mathbf{q}_i, \mathbf{v}) f(\mathbf{v}) \mathrm{d} \mathbf{v} \\
& \qquad \quad- \sum\limits_{i=1}^{N}\int_{T} v_i E_{i,t}^c(\mathbf{q}_i, \mathbf{v}) f(\mathbf{v}) \mathrm{d} \mathbf{v} \\
\end{aligned}
\end{equation*}
\begin{equation}\label{eq13}
\begin{aligned}
&=\int_{T} v_{FC} \left(\sum\limits_{i=1}^{N} \sum\limits_{m=0}^{M} q_{i,m} (\mathbf{v}) J_i^D(q_{i,m} = m) + J^P\right) f(\mathbf{v}) \mathrm{d} \mathbf{v} \\
& \qquad \quad- \sum\limits_{i=1}^{N}\int_{T} v_i E_{i,t}^c(\mathbf{q}_i, \mathbf{v}) f(\mathbf{v}) \mathrm{d} \mathbf{v}  \\
&\qquad \quad- \left[ \sum\limits_{i=1}^{N}\int_{T} \left( p_i(\mathbf{v}) - v_i E_{i,t}^c(\mathbf{q}_i, \mathbf{v})  \right) f(\mathbf{v}) \mathrm{d} \mathbf{v} \right]
\end{aligned}
\end{equation}

By \eqref{eq7} of Lemma 1, we know that:
\begin{equation*}
\begin{aligned}
&\int_{T} \left( p_i(\mathbf{v}) - v_i E_{i,t}^c(\mathbf{q}_i, \mathbf{v})  \right) f(\mathbf{v}) \mathrm{d} \mathbf{v} \\
& = \int\limits_{a_i}^{b_i} \mathcal{U}_{i,t} (p_i, \mathbf{q}_i, v_i) f_i(v_i) \mathrm{d} v_i \\
& = \int\limits_{a_i}^{b_i} \left(\mathcal{U}_{i,t} (p_i, \mathbf{q}_i, b_i) + \int\limits_{v_i}^{b_i} \mathcal{B}_{i,t} (\mathbf{q}_i,w_i) \mathrm{d} w_i \right) f_i(v_i) \mathrm{d} v_i \\
\end{aligned}
\end{equation*}
\begin{equation}\label{eq14}
\begin{aligned}
&= \mathcal{U}_{i,t} (p_i, \mathbf{q}_i, b_i) + \int\limits_{a_i}^{b_i}  \int\limits_{a_i}^{w_i}f_i(v_i) \mathcal{B}_{i,t} (\mathbf{q}_i,w_i) \mathrm{d} v_i \mathrm{d} w_i   \\
& = \mathcal{U}_{i,t} (p_i, \mathbf{q}_i, b_i) + \int\limits_{a_i}^{b_i} F_i (w_i) \mathcal{B}_{i,t} (\mathbf{q}_i,w_i) \mathrm{d} w_i  \\
& = \mathcal{U}_{i,t} (p_i, \mathbf{q}_i, b_i) + \int_{T} F_i (v_i) E_{i,t}^c(\mathbf{q}_i, \mathbf{v})  f_{-i}(\mathbf{v}_{-i}) \mathrm{d} \mathbf{v}
\end{aligned}
\end{equation}

Substituting \eqref{eq14} into \eqref{eq13} gives us:
\begin{equation*}
\begin{aligned}
&\mathcal{U}_t^{FC} (\mathbf{p}, \mathbf{q}) \\
&= \int_{T} \left[ v_{FC} \left(\sum\limits_{i=1}^{N} \sum\limits_{m=0}^{M} q_{i,m} (\mathbf{v}) J_i^D(q_{i,m} = m) + J^P\right) \right.\\
&\qquad \quad \left.- \sum\limits_{i=1}^{N} v_i E_{i,t}^c(\mathbf{q}_i, \mathbf{v}) \right] f(\mathbf{v}) \mathrm{d} \mathbf{v} - \sum\limits_{i=1}^{N} \mathcal{U}_{i,t} (p_i, \mathbf{q}_i, b_i)\\
& \qquad \quad - \sum\limits_{i=1}^{N} \int_{T} F_i (v_i) E_{i,t}^c(\mathbf{q}_i, \mathbf{v})  f_{-i}(\mathbf{v}_{-i}) \mathrm{d} \mathbf{v}   \\
\end{aligned}
\end{equation*}
\begin{equation}\label{eq15}
\begin{aligned}
&~=  \int_{T} \left[ v_{FC} \left(\sum\limits_{i=1}^{N} \sum\limits_{m=0}^{M} q_{i,m} (\mathbf{v}) J_i^D(q_{i,m} = m) + J^P\right) \right.\\
&\qquad \left.- \sum\limits_{i=1}^{N} v_i E_{i,t}^c(\mathbf{q}_i, \mathbf{v}) \right] f(\mathbf{v}) \mathrm{d} \mathbf{v} \\
&\qquad- \sum\limits_{i=1}^{N} \int_{T} F_i (v_i) E_{i,t}^c(\mathbf{q}_i, \mathbf{v}) \frac{f(\mathbf{v})}{f_i(v_i)}  \mathrm{d} \mathbf{v}  - \sum\limits_{i=1}^{N} \mathcal{U}_{i,t} (p_i, \mathbf{q}_i, b_i)
\end{aligned}
\end{equation}
In \eqref{eq15}, $\mathbf{p}$ appears only in the last term of the objective function. Also, by the IR constraint, we know that 
\begin{equation*}
\mathcal{U}_{i,t} (p_i, \mathbf{q}_i, b_i) \geq 0, \hspace{15pt} i \in \left\{1, \ldots N \right\}
\end{equation*}
Thus, to maximize \eqref{eq15} subject to the constraints, we must have
\begin{equation*}
\mathcal{U}_{i,t} (p_i, \mathbf{q}_i, b_i) = 0, \hspace{15pt} i \in \left\{1, \ldots N \right\}
\end{equation*}
Combining this condition with \eqref{eq2}, \eqref{eq6} and \eqref{eq8}, we get
\begin{equation*}
\begin{aligned}
\mathcal{U}_{i,t} (p_i, \mathbf{q}_i, v_i) &= \int\limits_{v_i}^{b_i} \mathcal{B}_{i,t} (\mathbf{q}_i,v_i) \mathrm{d} v_i\\
&= \int\limits_{v_i}^{b_i} \int_{T_{-i}} E_{i,t}^c(\mathbf{q}_i, \mathbf{v}) f_{-i}(\mathbf{v}_{-i}) \mathrm{d} \mathbf{v}_{-i} \mathrm{d} v_i \\
&= \int_{T_{-i}} \int\limits_{v_i}^{b_i} E_{i,t}^c(\mathbf{q}_i, \mathbf{v}_{-i}, r_i) \mathrm{d} r_i f_{-i}(\mathbf{v}_{-i}) \mathrm{d} \mathbf{v}_{-i} \\
&= \int_{T_{-i}} \left[ p_i(\mathbf{v}) -  v_i E_{i,t}^c(\mathbf{q}_i, \mathbf{v}) \right] f_{-i}(\mathbf{v}_{-i}) \mathrm{d} \mathbf{v}_{-i}
\end{aligned}
\end{equation*}
where the last two equations give the formulation of the payment in \eqref{eq12}. Thus, if the FC pays each sensor according to Equation \eqref{eq12}, then the IR constraint
is satisfied, as well as the best possible value of the last term in \eqref{eq15} is obtained, which is zero. So we can simplify the objective function of our optimization problem to \eqref{eq14} subject to the three remaining constraints. Thus, Theorem 4.1 follows. 
\end{proof}

\section{Implementation of the Proposed Mechanism}
In this section, we consider the algorithm to obtain the solution for the proposed mechanism. We first study the optimal algorithm to solve our optimization problem in \eqref{eq16}, and then the case when sensors' valuations are dependent on their residual energy.

\subsection{Multiple-Choice Knapsack Problems}
The knapsack problem is one of the most important problems in discrete programming \cite{dudzinski1987exact},
and it has been intensively studied for both its theoretical importance and its applications in industry
and financial management. The knapsack problem can be described as: given a set of $n$ items with profit $p_i$ and weight $w_i$ and a knapsack with capacity $c$, select a subset of the items so as to maximize the total profit of the knapsack while the total weights does not exceed $c$
\begin{equation}
\begin{aligned}
& \underset{x_{i}}{\text{maximize}} & & \sum_{i=1}^{n} p_{i}x_{i} \\
& \text{subject to} & & \sum_{i=1}^{n} w_{i} x_{i} \leq c\\
& & & x_{i} \in \{0,1\}, \hspace{10pt} i \in \left\{1, \ldots N \right\},
\end{aligned}
\end{equation}
There are several types of problems in the family of the knapsack problems.
The multiple-choice knapsack problem (MCKP) occurs when the set of items is partitioned into classes and the binary choice of taking an item is replaced by the selection of exactly one item out of each class of items \cite{pisinger2004knapsack}.
Assume that $m$ classes $N_1,\ldots,N_m$ of items are to be packed in a knapsack with capacity $c$. Each item $j \in N_i$ has a profit $p_{i,j}$ and weight $w_{i,j}$. The problem is how to choose one item from each class to maximize the total profit of the knapsack while the total weight does not exceed $c$. The binary variables $x_{i,j}$ are introduced to represent that item $j$ is taken from class $N_i$, the MCKP is formulated as \cite{pisinger2004knapsack} \cite{Pisinger1995394}:
\begin{equation}
\begin{aligned}
& \underset{x_{i,j}}{\text{maximize}} & & \sum_{i=1}^{m} \sum_{j\in N_i} p_{i,j}x_{i,j} \\
& \text{subject to} & & \sum_{i=1}^{m} \sum_{j\in N_i} w_{i,j} x_{i,j} \leq c\\
& & & \sum_{j\in N_i} x_{i,j} = 1, \hspace{10pt} i \in \left\{1, \ldots m \right\} \\
& & & x_{i,j} \in \{0,1\}, \hspace{10pt} i \in \left\{1, \ldots N \right\}, m \in N_i
\end{aligned}
\end{equation}
where $p_{i,j}$, $w_{i,j}$ and $c$ are assumed to be nonnegative integers, with class $N_i$ having size $n_i$ so that the total number of items is $n = \sum_{i=1}^{m} n_i$. By formulating a recursion form, the MCKP can be solved optimally by the dynamic programming method in pseudo-polynomial time with acceptable computation cost when the
number of sensors and the bit constraint are not large.

\subsection{Optimal Solution by Dynamic Programming}
\label{Solve_DP}
 Substituting \eqref{energy_model} into \eqref{eq11}, the objective function $\mathcal{Y}_t$ becomes:
\begin{equation}
\begin{aligned}
& \mathcal{Y}_t(\mathbf{q}, \mathbf{v}) \\
& = v_{FC} \left(\sum\limits_{i=1}^{N} \sum\limits_{m=0}^{M} q_{i,m} (\mathbf{v}) \operatorname{tr} \left( J_{i,t}^D(q_{i,m} = m)\right) + \operatorname{tr} \left(J_t^P\right)\right) \\
&\qquad  - \sum\limits_{i=1}^{N} \left(v_i + \frac{F_i (v_i)}{f_i(v_i)}\right) \left(\left(\sum\limits_{m=0}^{M} m q_{i,m}\right)\epsilon_{amp}h_i^2\right)\\
& = \sum\limits_{i=1}^{N} \sum\limits_{m=0}^{M} q_{i,m} (\mathbf{v}) \left[ v_{FC} \operatorname{tr} \left(J_{i,t}^D(q_{i,m} = m)\right) \right.\\
&\qquad \left. - m \epsilon_{amp}h_i^2 \left(v_i + \frac{F_i (v_i)}{f_i(v_i)}\right)\right] + v_{FC}  \operatorname{tr} \left(J_t^P\right)
\end{aligned}
\end{equation}
where the last term is not subject to the solutions of the optimization problem. Thus, by denoting $V_{i,m} = v_{FC} \operatorname{tr} \left(J_{i,t}^D(q_{i,m} = m)\right) - m \epsilon_{amp}h_i^2 \left(v_i + \frac{F_i (v_i)}{f_i(v_i)}\right)$, the optimization problem in \eqref{eq11} can be written as:
\begin{equation}\label{eq16}
\begin{aligned}
& \underset{\mathbf{q}}{\text{maximize}} & & \int_{T} \left[\sum\limits_{i=1}^{N} \sum\limits_{m=0}^{M} V_{i,m} q_{i,m} \right] f(\mathbf{v}) \mathrm{d} \mathbf{v}\\
& \text{subject to} & & \sum\limits_{i=1}^{N} \sum\limits_{m=0}^{M} m q_{i,m} \leq M\\
& & & \sum\limits_{m=0}^{M} q_{i,m} = 1, \hspace{10pt} i \in \left\{1, \ldots N \right\} \\
& & & q_{i,m} \in \{0,1\}, \hspace{10pt} i \in \left\{1, \ldots N \right\}, m \in \left\{1, \ldots M \right\}
\end{aligned}
\end{equation}
Observe that, given $\mathbf{v}$, \eqref{eq16} is a Multiple Choice Knapsack Problem (MCKP), which is an extension of the Knapsack Problem (KP) \cite{pisinger2004knapsack}. We interpret our optimal auction based bit allocation problem as a MCKP as follows: In the WSN consisting of $N$ sensors, information to be transmitted by each sensor $i$ has $M+1$ variants (bits) where the $m$-th variant has weight $w_{i,m} = m$ and utility value $V_{i,m}$. As the network can carry only a limited capacity $M$, the objective is to select one variant of each sensor such that the overall utility value is maximized without exceeding the capacity constraint.

The MCKP can be solved by the dynamic programming approach in pseudo polynomial time with $O(NM)$ operations \cite{dudzinski1987exact}, \cite{pisinger2004knapsack}. Let $b_l(y)$ denote the optimal solution to the MCKP defined on the first $l$ sensors with restricted capacity $y$
\begin{equation}
\begin{aligned}
&b_l(y) = \\
&\text{max} \left\{\sum\limits_{i=1}^{l} \sum\limits_{m=0}^{M} q_{i,m} V_{i,m} \middle |
													\begin{array}{l}
													\sum\limits_{i=1}^{l} \sum\limits_{m=0}^{M} m q_{i,m} \leq y, \\
													\sum\limits_{m=0}^{M} q_{i,m} = 1, \hspace{3pt} i \in \left\{1, \ldots l \right\},\\
													q_{i,m} \in \{0,1\}, \hspace{3pt} i \in \left\{1, \ldots l \right\}, \\
													\qquad \quad \qquad m \in \left\{0, \ldots M \right\}
													\end{array}\right\}
													\end{aligned}
\end{equation}
and we assume that $b_l(y) = -\infty$ if $y \leq 0$, $l >0$ or $y <0$, $l=0$. Initially we set $b_0(y) = 0$ for $y = 0, \ldots M$. We use the following recursion to compute $b_l(y)$ for $l = 1, \ldots, N$:
\begin{equation}
b_l(y) = \underset{k = 0,\ldots,y}{\text{max}} \left\{ b_{l-1} (y-k) + V_{l,k} \right\}
\end{equation}
$$\vdots$$
$$b_N(M) = \underset{k = 0,\ldots,M}{\text{max}} \left\{ b_{N-1} (M-k) + V_{N,k} \right\}$$

\begin{figure}
\centering
\includegraphics[width=0.75\columnwidth]{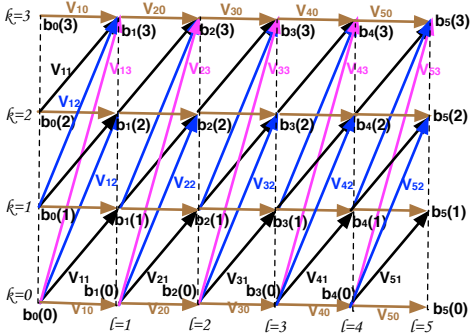}
\caption{Trellis of the dynamic programming algorithm for time step $t$}
\label{Fig_DP}
\end{figure}

To explain the dynamic programming algorithm, we construct the trellis for $N+1$ stages and $M+1$ states associated with each stage \cite{engin_bit_allocation}. Fig. \ref{Fig_DP} gives an example trellis for $N=5$ and $M=3$, which contains 6 stages and 4 states. For example, $b_1(1) = \text{max} \left\{ b_{0} (1) + V_{1,0}, b_0(0) + V_{1,1} \right\}$ and $b_3(2) = \text{max} \left\{ b_2 (2) + V_{3,0}, b_2(1) + V_{3,1}, b_2(0) + V_{3,2} \right\}$. Thus, the optimal solution is found as $b = b_N(M)$. Note that to get the optimal bit allocation, the solution $\mathbf{q}$ needs to be recorded at each step corresponding to the optimal $b_l(y)$. On the other hand, the dynamic programming algorithm for our optimization problem is pseudo polynomial and has the complexity $O(NM)$ \cite{Pisinger1995394}. Therefore, the optimality of the problem \eqref{eq16} is guaranteed and the rationality and the truthfulness properties of our incentive-based mechanism are maintained.

The payment to each sensor can be calculated from \eqref{eq12}. The key point is to find the thresholds between $v_i$ and $b_i$ above which the sensors will be assigned different number of bits compared to the original optimal solution of \eqref{eq11}. The pseudo-code of the detailed algorithm is presented in Algorithm \ref{alg:payment}.
\begin{algorithm}[t]
\caption{Payment Calculation}
\label{alg:payment}
\begin{algorithmic}[1]
\FOR {$i = 1:N$}
	\STATE Calculate the total number of bits sensor $i$ is assigned to through the optimal solution $\mathbf{q}^0_i$ of \eqref{eq11}, $R_i = \sum\limits_{m=0}^{M} m q_{i,m}$.
	\STATE Set the value estimate of sensor $i$ to $b_i$, run our mechanism again
		\IF {sensor $i$ is still assigned $R_i$ bits}
				\STATE $p_i = b_i E_{i,t}^c(\mathbf{q}^0_i, \mathbf{v})$
		\ELSE
						
				\STATE There must be at least one point between $v_i$ and $b_i$ above which less than $R_i$ will be assigned to sensor $i$. We apply bisection method to find the thresholds.
%
%
				\REPEAT \STATE{Bisection method.}
				\UNTIL {finding the point above which sensor $i$ is not assigned any number of bits.}\\
				\STATE Assume there are $n$ thresholds totally. Note that $n \leq R_i$.
				\STATE Denote the thresholds as $w^1_i, \ldots w^{n}_i$ and the corresponding solution vectors as $\mathbf{q}_i^1, \ldots, \mathbf{q}_i^{n-1}$.\\
				\STATE Then $p_i = w^1_i E_{i,t}^c(\mathbf{q}^0_i, \mathbf{v}) + (w^2_i - w^1_i)E_{i,t}^c(\mathbf{q}_i^{1}, \mathbf{v}) + \ldots + (w^n_i - w^{n-1}_i)E_{i,t}^c(\mathbf{q}_i^{n-1}, \mathbf{v})$.
		\ENDIF
\ENDFOR
\end{algorithmic}
\end{algorithm}




\subsection{Residual Energy Dependent Valuations}
\label{Energy_Efficiency}
So far, we have assumed that the valuation of the sensors are invariant
of the amount of residual energy of the sensors over time. We now relax this assumption and consider that the (true) valuations of the sensors are dependent on their residual energy.
Therefore, the remaining energy of the sensors are included in their utility functions, 
\begin{equation}\label{eq2_ee}
\begin{aligned}
\hat{\mathcal{U}}_{i,t} (p_i,\mathbf{q}_i, v_i) &= \int_{T_{-i}} \left[ p_i(\mathbf{v}) -  v_i g(e_{i,t-1}) E_{i,t}^c(\mathbf{q}_i, \mathbf{v}) \right] \\
&\qquad \quad \qquad \quad \qquad \quad \quad f_{-i}(\mathbf{v}_{-i}) \mathrm{d} \mathbf{v}_{-i}
\end{aligned}
\end{equation}
where $e_{i,t-1}$ is the remaining energy of sensor $i$ at the beginning of time
$t-1$, i.e., $e_{i,t-1} = e_{i,t-2} - E_{i,t-1}^c$,
so that including $g(e_{i,t-1})$ makes the problem more general, and
the new objective function $\hat{\mathcal{Y}}_t$ of \eqref{eq11} becomes
\begin{equation}
\begin{aligned}
\hat{\mathcal{Y}}_t &= v_{FC} \operatorname{tr}\left(\sum\limits_{i=1}^{N} \sum\limits_{m=0}^{M} q_{i,m} (\mathbf{v}) J_{i,t}^D(q_{i,m} = m) + J_t^P\right) \\
&\qquad \quad - \sum\limits_{i=1}^{N} g(e_{i,t-1}) E_{i,t}^c(\mathbf{q}_i, \mathbf{v}) \left(v_i + \frac{F_i (v_i)}{f_i(v_i)}\right)
\end{aligned}
\end{equation}
and the corresponding value of $V_{i,m}$ in \eqref{eq16} is 
\begin{equation*}
\begin{aligned}
\hat{V}_{i,m} &= v_{FC} \operatorname{tr} \left(J_{i,t}^D(q_{i,m} = m)\right) \\
&\qquad \quad- h(e_{i,t-1}) m \epsilon_{amp}h_i^2 \left(v_i + \frac{F_i (v_i)}{f_i(v_i)}\right)
\end{aligned}
\end{equation*}
Referring to Section \ref{Solve_DP}, we can find that the target tracking problem with residual energy based valuation can also be mapped to a MCKP and solved by the dynamic programming method in pseudo-polynomial time.
We assume that the FC knows the energy status of all the sensors at each time step, so the FC and the sensors decide how the value estimate of the sensors change with their remaining energy at the beginning of the tracking task.

\section{Simulation Experiments}
In this section, we study the dynamics of our proposed incentive-based target tracking mechanism in a sensor network. In the experiments, $N = 25$ sensors are deployed uniformly in the ROI with the size $50m \times 50m$ and the FC is located at $x_{FC} = -22, y_{FC} = 20$. Note that our model can handle any sensor deployment pattern as long as the sensor locations are known to the FC in advance. The signal power at distance zero is $P_0 = 1000$. The target motion follows the white noise acceleration model with $\tau = 2.5 \times 10^{-3}$. The variance of the measurement noise is selected as $\sigma = 1$. The prior distribution about the state of the target, $p(\mathbf{x}_0)$, is assumed to be Gaussian with the covariance matrix $\Sigma_0 = diag [\sigma_{\mathbf{x}}^2~\sigma_{\mathbf{x}}^2~0.01~0.01]$ where $3\sigma_{\mathbf{x}} = 2$ so that the initial location of the target stays in the ROI with high probability. The pdf of the value estimate of sensor $i$, $v_i$, is assumed to be uniformly distributed between $a_i$ and $b_i$ with $a_i = 0.1$ and $b_i = 1$, and the value estimate of the FC is assumed to be $v_{FC} = 1$. The performance of the target location estimator is determined in terms of the mean square error (MSE) at each time step over $100$ Monte Carlo trials and the number of particles of each Monte Carlo trial is $N_s = 5000$.
\begin{figure*}[htb]
\begin{center}
\subfigure[]{
\includegraphics[%
  width=0.4\textwidth, height = 0.3\textwidth]{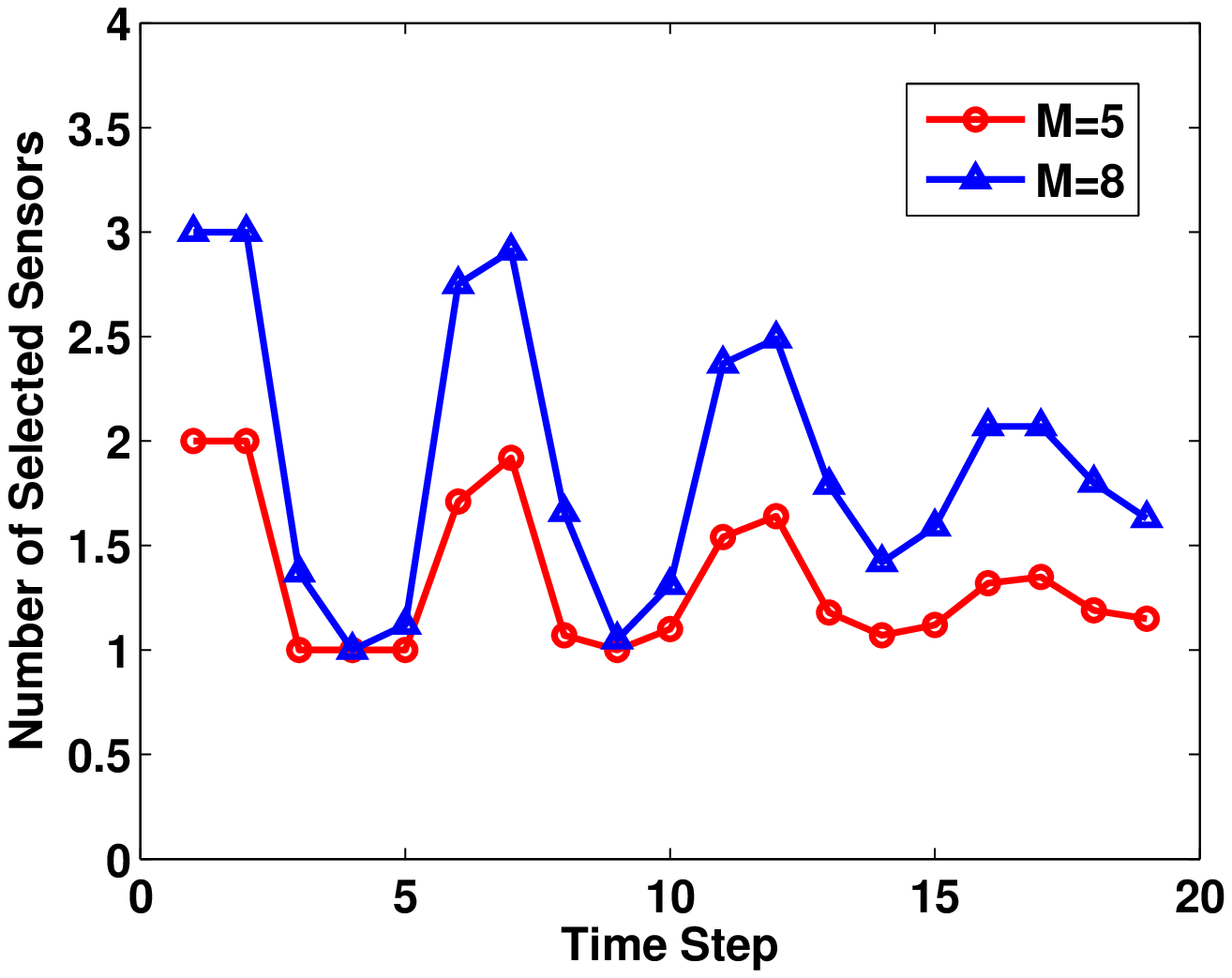}
\label{fig_sensor} }
\subfigure[]{
\includegraphics[%
  width=0.4\textwidth, height = 0.3\textwidth]{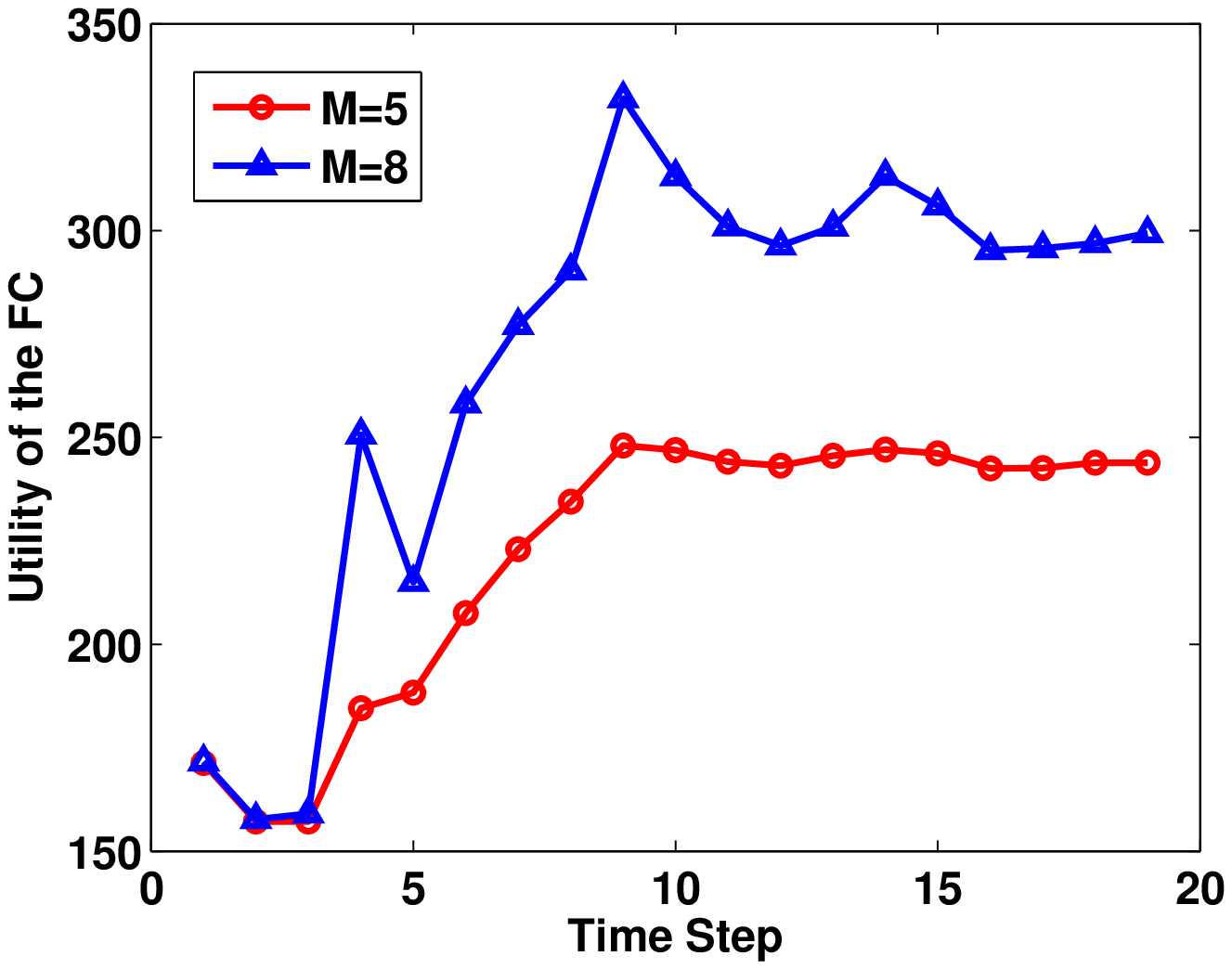}
\label{fig_u} }\\
\subfigure[]{
\includegraphics[%
  width=0.4\textwidth, height = 0.3\textwidth]{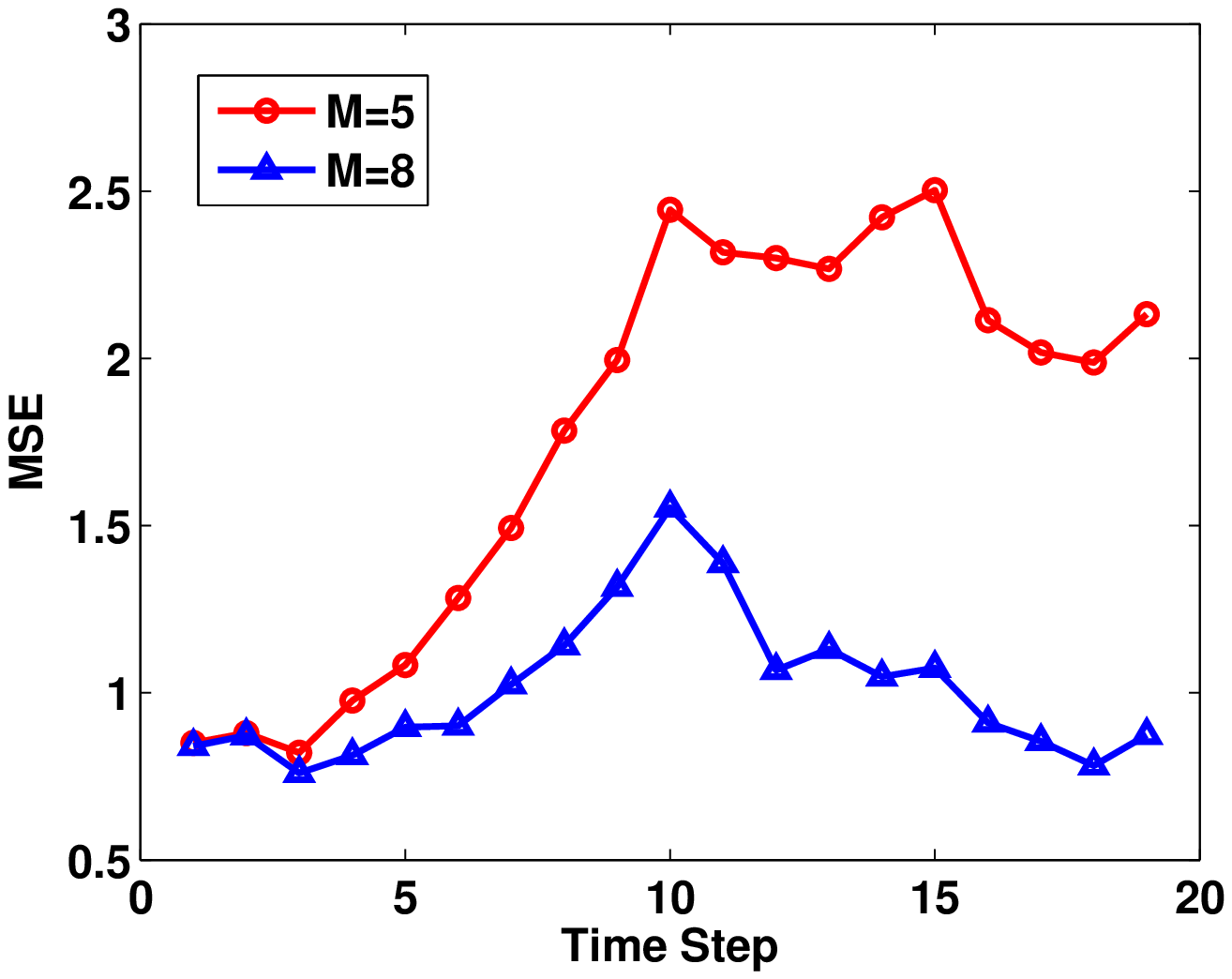}
\label{fig_mse} }
\subfigure[]{
\includegraphics[%
  width=0.4\textwidth, height = 0.3\textwidth]{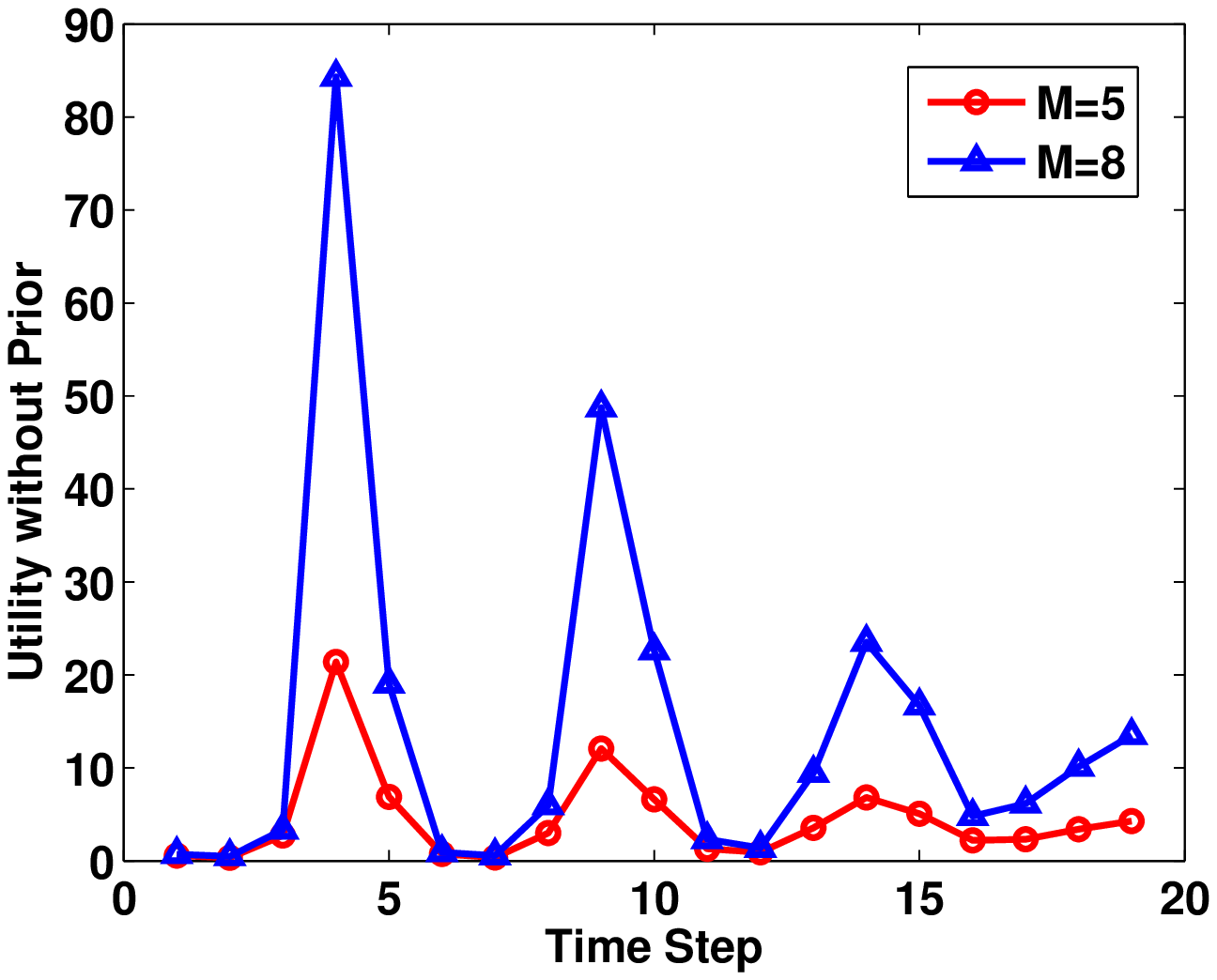}
\label{fig_u2} }
\caption{Bit allocation with $M=5$ and $M=8$ (a) The number of selected sensors. (b) The utility of the FC. (c) MSE at each time step. (d) The utility of the FC with prior information excluded.}
\label{M5_M8}
\end{center}
\end{figure*}

We first consider the implementation of our optimal auction based target tracking procedure as illustrated in Section V-B, where the valuations of
the sensors do not vary with their residual energy.
In the target motion model, measurements are assumed to be taken at regular intervals of $\mathcal{D} = 1.25$ seconds and we observe the target for 20 time steps.
The mean of the prior distribution about the target state is assumed to be $\mu_0 = [-23~-23~2~2]^T$.
Two different values of $M$, namely 5 and 8, are considered to examine the impact of total number of available bits. In Fig. \ref{fig_sensor}, we show the number of sensors that are selected at each time step. And the corresponding tracking MSE is shown in Fig. \ref{fig_mse}. We can see that around time steps 4, 9, 14 and 19, the target is relatively close to some sensor,
and fewer sensors are activated. When the target is not relatively close to any
sensor in the network, during time periods 5-8, 12-13 and 18-19, the uncertainty about the target is relatively high, which increases the estimation error, so that more sensors are activated. 
Fig. \ref{fig_u} shows the total utility of the FC at each time step. 
Note that because of the accumulated information, the utility of the FC increases as time goes by and saturates during the last ten time steps. In Fig. \ref{fig_u2}, we also show the utility of the FC when the term due to prior FIM, $J_t^P$, is not included in the expression for the utility function given in \eqref{eq1}. Due to the low noise environment and the accumulation of the information, $J_t^P$ contains more information and the contribution of the data to the utility function as a function of time diminishes. This is evident in Fig. \ref{fig_u2} in that we observe a decreasing trend of the utility function as a function of time.
Moreover, for all the results, we observe that when we have more number of bits (resources) to allocate, the performance in terms of tracking performance and the gains of the FC is better, i.e., results for $M=8$ are better than those for $M=5$.  


\begin{figure*}[htb]
\begin{center}
\subfigure[]{
\includegraphics[%
  width=0.4\textwidth, height = 0.3\textwidth]{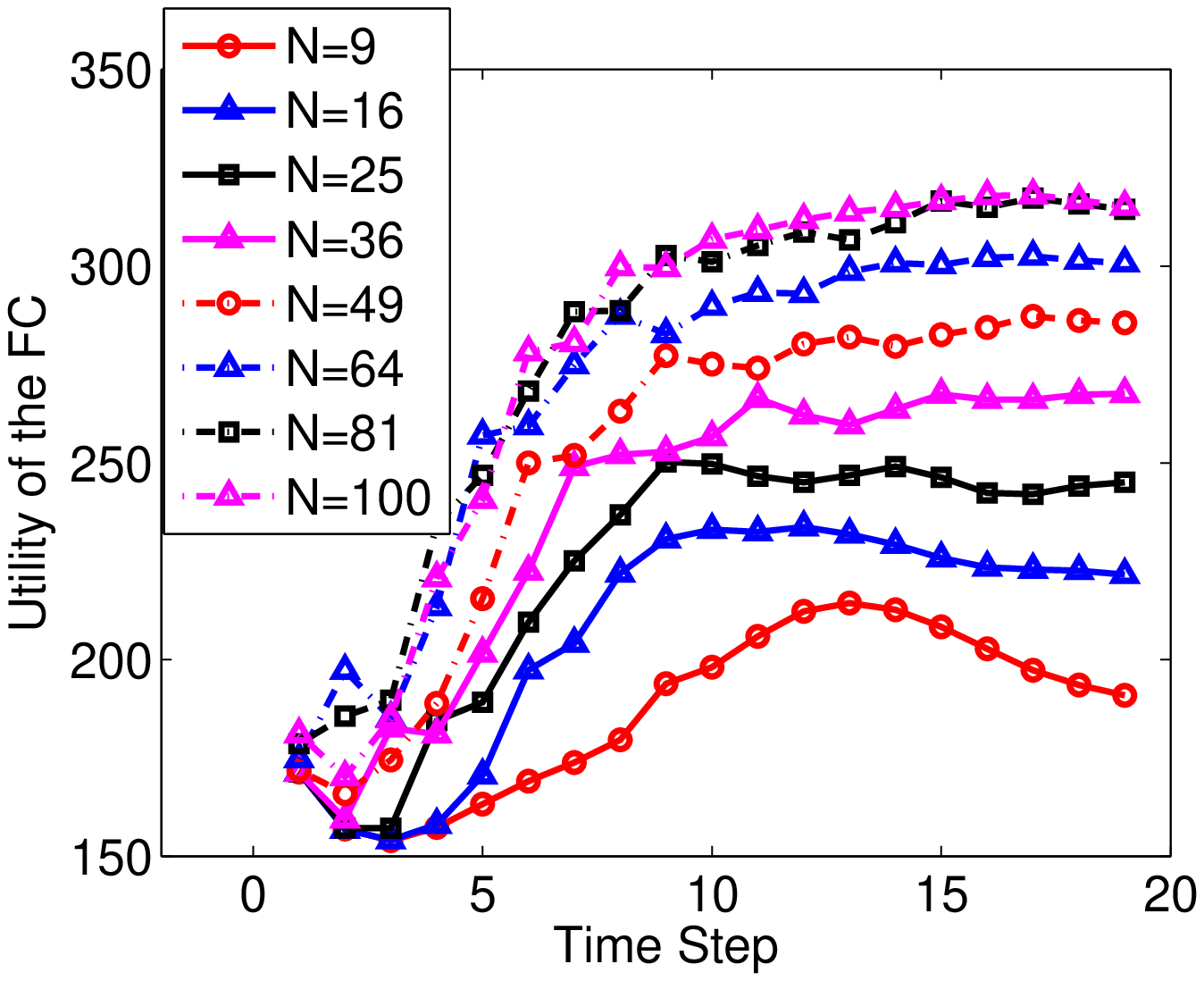}
\label{fig_U_N} }
\subfigure[]{
\includegraphics[%
  width=0.4\textwidth, height = 0.3\textwidth]{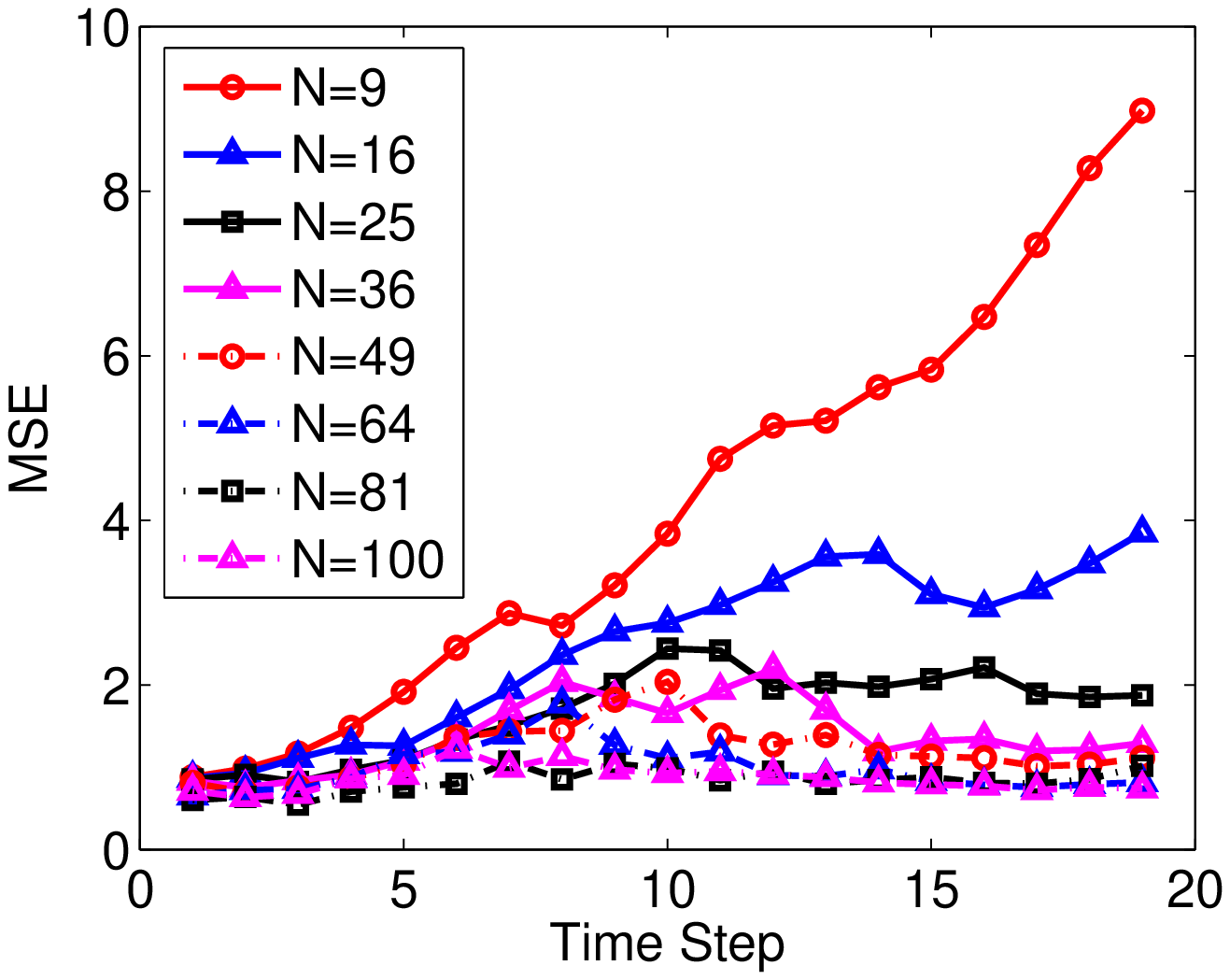}
\label{fig_MSE_N} }
\caption{With different number of sensors in the ROI (a) The utility of the FC. (b) The MSE at each time step.}
\label{u_mse_N}
\end{center}
\end{figure*}


In Fig. \ref{u_mse_N}, we study the utility of the FC (Fig. \ref{fig_U_N}) and the corresponding MSE (Fig. \ref{fig_MSE_N}) when there are different number of users in the network. The figures show that as the number of sensors in the WSN increases, the utility of the FC increases, and the corresponding MSE decreases. It is because as the sensor density in the ROI increases, the chances of the FC selecting more informative sensors that require less payment increase at each tracking step. In other words, competition among sensors increases as sensor density increases, thereby making sensors participate with lesser payments. Also, the FC's utility and the MSE saturate when the number of sensors in the ROI is large. This is because, as has been observed in economic theory, a large number of competitors in a market correspond to a scenario of perfect competition and result in the market prices to saturate. Note that when $N=9$ and 16, the utility of the FC decrease and the MSE diverges after a certain time. This is due to the fact that the number of sensors is not sufficient for accurate tracking over the large ROI.

Now we consider the case mentioned in Section \ref{Energy_Efficiency} where the sensors value their remaining energy. In \eqref{track_model}, we consider $\mathcal{D} = 1$ second and the observation length is 40 seconds.
The mean of the prior distribution is assumed to be $\mu_0 = [-10~-11~2~2]^T$ and the other parameters are kept the same.
The target moves back and forth between two different points. During the first and the third 10 second intervals, the target
moves as described by model \eqref{track_model} in the forward direction. At other times during the second and fourth 10 second intervals, the target moves in the reverse direction with all
other parameters fixed. For the function $g(e_{i,t-1})$ in \eqref{eq2_ee}, we take an example where the value estimates of the sensors increase
as their remaining energy decreases according to $g(e_{i,t-1}) = 1/(e_{i,t-1}/E_{i,0})^k$, where $E_{i,0}$ is the initial energy
of each sensor at the initial time step, and the power $k$ controls the increasing speed.
In Fig. \ref{fig_LT}, we show 
\begin{inparaenum}[\itshape a\upshape)]
\item the remaining number of active sensors in the WSN of the FIM based bit allocation algorithm in \cite{engin_bit_allocation},
\item our auction based bit allocation without residual energy consideration, and, 
\item when residual energy is considered with different exponent $k$.
\end{inparaenum} 
Note that in \cite{engin_bit_allocation}, the property of the determinant of the FIM brought the suboptimality of the 
approximate dynamic programming method. Here, to compare with our work, we employ the trace of the FIM as the bit allocation metric to get the optimal solutions using dynamic programming. For FIM based bit allocation algorithm and our algorithm without residual energy consideration, a specific bandwidth allocation maximizes the FC's utility for a given target location, resulting in the same set of sensors to be repeatedly selected  (as the target travels back to pre-visited locations) until the sensors die 
(sensors are defined to be dead when they run out of their energy). Thus, those sensors die earlier than the others and the number of active sensors decreases rapidly.
However, the increase of the value estimate based on residual energy prevents the more informative sensors from being selected repeatedly because
they become more expensive if they have already been selected earlier. In other words, on an average, sensors are allocated lesser number of bits as their residual energy decreases.  
Moreover, the larger the exponent $k$ is, the more the sensors value their remaining energy. 
We define the lifetime of the sensor network as the time step at which the network becomes non-functional (we say that the network is non-functional when a specified percentage $\alpha$ of the sensors die~\cite{Yunxia_lifetime}). For example, we assume $\alpha=0.6$, in the energy unaware case, the lifetime of the network is around 22. However, by our algorithm, the lifetime of the network gets extended to 30 when $k=3$, and even gets extended to the last time step when the tracking task ends with $k=15$ or $k=30$, i.e., the network keeps functional until the last tracking step.

\begin{figure}[htb]
\centering
  \includegraphics[width=.75\columnwidth]{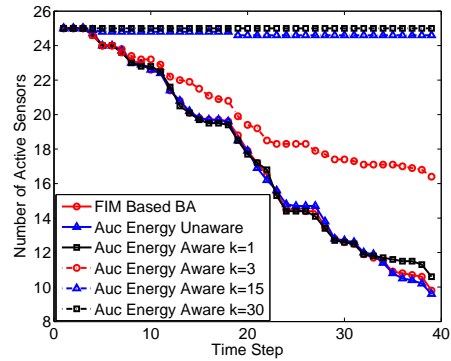}
\caption{The remaining number of active sensors in the WSN}
\protect \label{fig_LT}
\end{figure}

\begin{figure*}[tb]
\begin{center}
\subfigure[]{
\includegraphics[%
  width=0.4\textwidth, height = 0.3\textwidth]{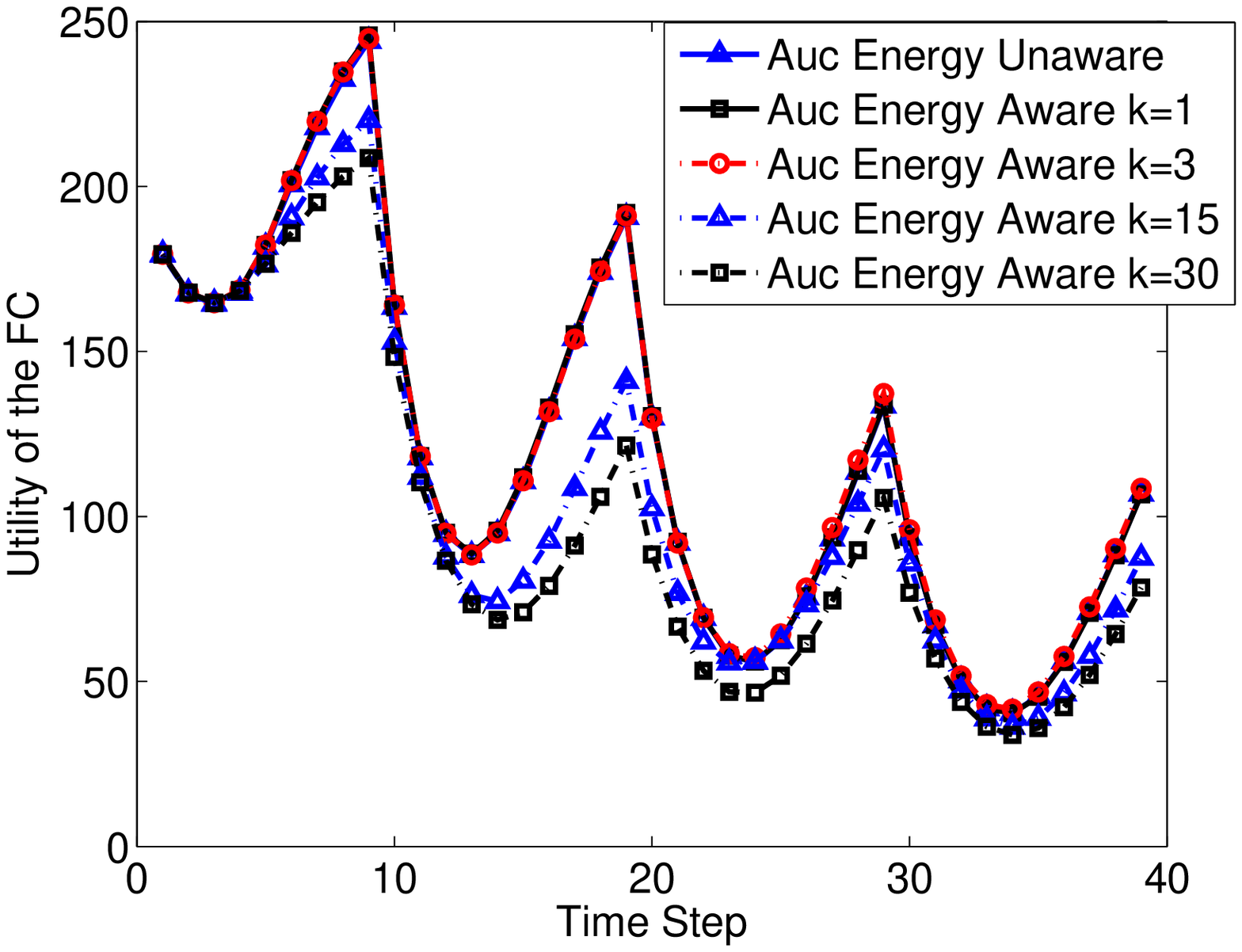}
\label{fig_Ufc} }
\subfigure[]{
\includegraphics[%
  width=0.4\textwidth, height = 0.3\textwidth]{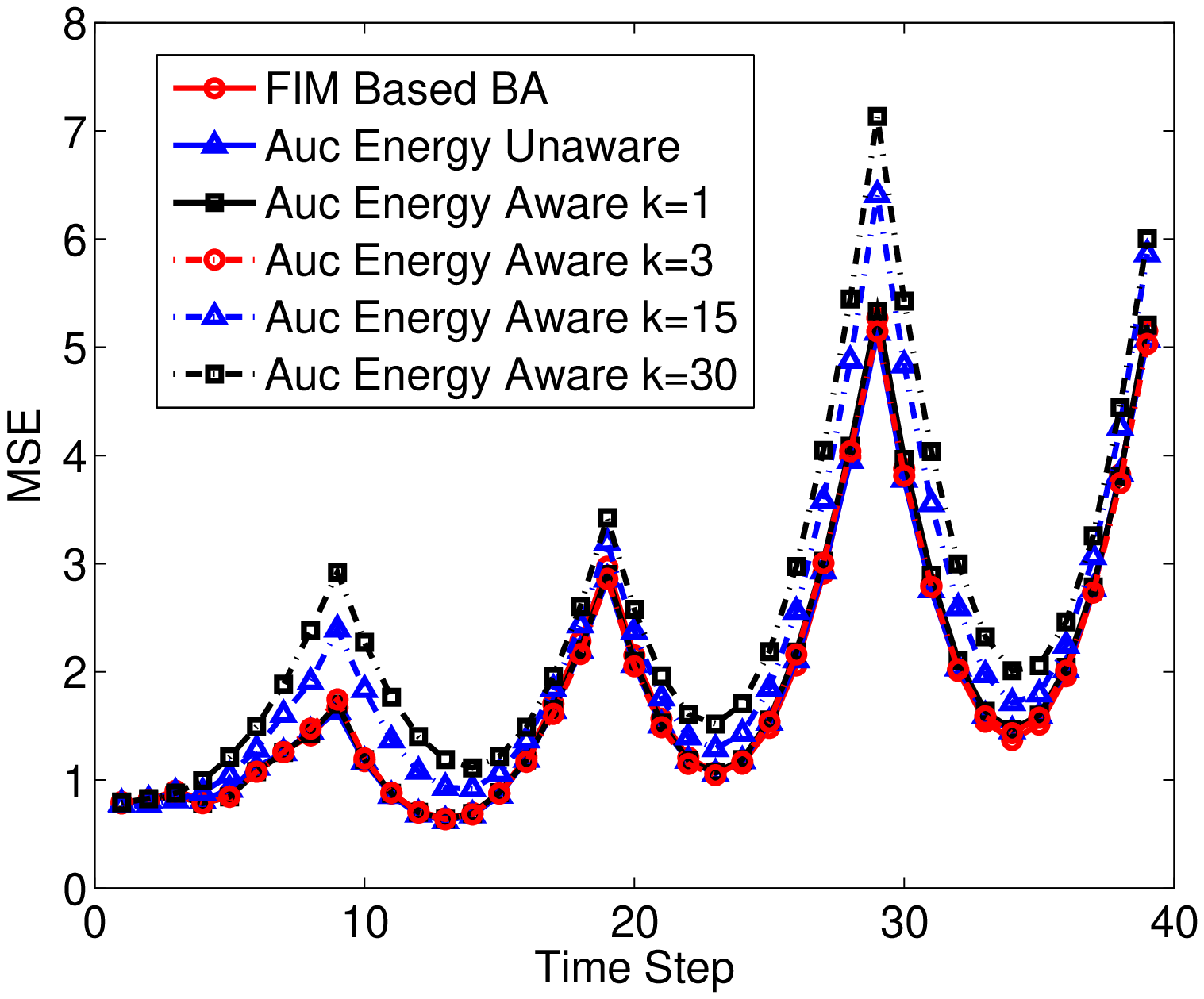}
\label{fig_MSE} }
\caption{With residual energy consideration (a) The utility of the FC. (b) The MSE at each time step.}
\label{u_mse_multiway}
\end{center}
\end{figure*}

Corresponding to Fig. \ref{fig_LT}, in Fig. \ref{u_mse_multiway}, we study the tradeoff of considering the function $g(e_{i,t-1}) = 1/(e_{i,t-1}/E_{i,0})^k$ in terms of the utility of the FC (Fig. \ref{fig_Ufc}) and MSE (Fig. \ref{fig_MSE}). 
As shown in Fig. \ref{fig_MSE}, the FIM based bit allocation algorithm gives the lowest tracking MSE because the sensors with highest Fisher information are always selected by the FC. For our algorithm without energy consideration and with residual energy considered as $k=1$ and $k=3$, the loss of the estimation error and the utility of
FC are very small. However, the loss increases 
when $k$ increases to $15$ and $30$. This is because when the sensors increase their valuations more aggressively,  
they become much more expensive after being selected for a few times. Then the FC, in order to maximize its utility, can only afford to select those cheaper (potentially non-informative) sensors and allocate bits to them.
In other words, depending on the characteristics of the energy concerns of the participating users, 
the tradeoff between 
the estimation performance and the lifetime of the sensor network is automatically achieved.

%

%
%

\section{Conclusion}

In this paper, we have designed a mechanism for the dynamic bandwidth allocation problem in the myopic target tracking problem by considering the sensors to be selfish and profit-motivated. To determine the distribution of the limited bandwidth and the pricing function for each sensor, the FC conducts an auction by soliciting bids from the sensors, which reflects how much they value their energy cost. Furthermore, our model guaranteed the rationality and truthfulness of the sensors. We implemented our model by formulating the optimization problem as a MCKP, which is solved by dynamic programming optimally. Also, we studied the fact that the trade-off between the utility of the FC and the lifetime of the sensor network can be achieved when the valuation of the sensors depend on their residual energy. In the future, we will study the mechanism design approach for the non-myopic target tracking problem.


\section*{Acknowledgment}
This work was supported by U.S. Air Force Office of Scientific Research
(AFOSR) under Grants FA9550-10-1-0458.





%

\bibliography{journal}
\bibliographystyle{IEEEtran}

%
%
%


\end{document}